\definecolor{light-gray}{gray}{0.95}
\newtheorem*{assumption*}{\assumptionnumber}
\providecommand{\assumptionnumber}{}
\newtheorem{definition}{Definition}
\newtheorem{theorem}{Theorem}
\begin{document}

\title{Extending On-chain Trust to Off-chain --\\ Trustworthy Blockchain Data Collection using Trusted Execution Environment (TEE)}
\author{Chunchi~Liu,
	Hechuan~Guo,
	Minghui~Xu,
	Shengling~Wang,~\IEEEmembership{Senior Member,~IEEE,}
	Dongxiao~Yu,~\IEEEmembership{Senior Member,~IEEE,}
	Jiguo~Yu,~\IEEEmembership{Fellow,~IEEE,}
	and~Xiuzhen~Cheng,~\IEEEmembership{Fellow,~IEEE}
	
	\IEEEcompsocitemizethanks{
		
		\IEEEcompsocthanksitem C. Liu was with the Department of Computer Science, The George Washington University and now with Ernst \& Young. E-mail: liuchunchi@gwu.edu \hfil\break
		\IEEEcompsocthanksitem  H. Guo, M. Xu (corresponding author), D. Yu, and X. Cheng are with the School of Computer Science and Technology, Shandong University. E-mail: \{mhxu, dxyu, xzcheng\}@sdu.edu.cn, ghc@mail.sdu.edu.cn \hfil\break
		\IEEEcompsocthanksitem S. Wang is with Beijing Normal University. E-mail: wangshengling@bnu.edu.cn \hfil\break
		\IEEEcompsocthanksitem J. Yu is with Qilu University of Technology. E-mail: 19001922@qq.com \hfil\break
		%
	}
}

\IEEEtitleabstractindextext{

\begin{abstract}
Blockchain creates a secure environment on top of strict cryptographic assumptions and rigorous security proofs. It permits on-chain interactions to achieve trustworthy properties such as traceability, transparency, and accountability. However, current blockchain trustworthiness is only confined to on-chain, creating a ``trust gap'' to the physical, off-chain environment. This is due to the lack of a scheme that can truthfully reflect the physical world in a real-time and consistent manner. Such an absence hinders further blockchain applications in the physical world, especially for the security-sensitive ones.

In this paper, we propose a framework to extend blockchain trust from on-chain to off-chain, and take trustworthy vaccine tracing as an example scheme. Our scheme consists of 1) a Trusted Execution Environment (TEE)-enabled trusted environment monitoring system built with the Arm Cortex-M33 microcontroller that continuously senses the inside of a vaccine box through trusted sensors and generates anti-forgery data; and 2) a consistency protocol to upload the environment status data from the TEE system to blockchain in a truthful, real-time consistent, continuous and fault-tolerant fashion. Our security analysis indicates that no adversary can tamper with the vaccine in any way without being captured. We carry out an experiment to record the internal status of a vaccine shipping box during transportation, and the results indicate that the proposed system incurs an average latency of 84 ms in local sensing and processing followed by an average latency of 130 ms to have the sensed data transmitted to and been available in the blockchain. 
\end{abstract}

\begin{IEEEkeywords}
	Blockchain; Trusted Execution Environment; Physical Traceability; Vaccine Tracing.
\end{IEEEkeywords}

}

\maketitle

\section{Introduction}

\IEEEPARstart{B}{lockchain} provides a secure environment that permits certain interactions within it to be trustworthy. It is built on cryptographic assumptions and proofs to guarantee objective security and trustworthiness. However, its trustworthiness currently is confined within the on-chain environment only, and can be hardly extended into the off-chain physical world. This is manifested by the lack of strong guarantees to ensure that blockchain data truthfully reflects the physical object it is bound, and that blockchain control instructions are always correctly deployed to the corresponding devices. This on-chain/off-chain ``trust gap''  significantly undermines the wide adoption of blockchain in security-sensitive physical world applications.  

Taking vaccine tracing as an example -- vaccine is critical to protect citizen health, thus keeping its physical distribution process highly traceable, transparent, and trustworthy is of paramount significance to public interest. It is one of the critical scenarios where digital data collected for monitoring the vaccine transportation should strictly reflect the physical world activities. However, we have seen many tragic cases where the vaccines were damaged or counterfeited due to the lack of highly trustworthy monitoring systems. Data were mostly manually entered, thus we don't know if the vaccines were \textit{actually} exposed to hazardous environments, defective or tampered with. During a surprise inspection to the rabies vaccine produced by Changchun Changsheng Ltd, China’s National Drug Administration (NDA) found more than 250,000 doses of vaccine were out of the production standards and the online data did not match with the actual product \cite{changshengNYT}\cite{changshengNature}. The Centers for Disease Control and Prevention of America (CDC) mandates the Hepatitis B vaccine to be stored between 2\textdegree{}C and 8\textdegree{}C. However, Nelson \textit{et al.} found that it was common in Indonesia that the vaccine was inadvertently frozen and thus harming its bioactivity, though the use of trusted vaccine vial monitors could significantly reduce the risk of heat-damage of the vaccine \cite{nelson2004hepatitis}. This ``physical intraceability'' challenge is common in many other security-sensitive applications in real life, and blockchain is not the silver bullet to the problem due to the current trust gap between on-chain environment and off-chain physical world. It calls for a general solution that can trustfully align ground truth on and off blockchain.

To address this issue, we resort to a trustworthy sensing scheme that 
consists of 1) a TEE-enabled trusted environment monitoring system that periodically senses the physical environment, creates anti-forgery data records, and uploads the records to blockchain; and 2) a consistency protocol to upload the records in real-time from the TEE system to the blockchain in a truthful, consistent, continuous and fault-tolerant fashion. We make careful security designs to ensure that no adversary can forge a fake data as all are verified authentic and that no adversary can physically violate the security requirements without being captured. 

 The contributions of this paper are summarized as follows.

\begin{enumerate}
\item We fully implement a trusted environment monitoring system to support trustworthy vaccine tracing in the real world-- we call it \textit{physical traceability}. We develop the system using the low-cost Cortex-M33 Trusted Execution Environment (TEE) microcontroller (MCU), which enforces strict physical isolation, thus achieving high security guarantee while keeping the system practically cheap for edge applications. We also justify why this TEE controller is the most secure \textit{and} cheap choice by comparing it with the mainstream products such as Cortex-A and Intel SGX.

\item 
We extend trust from on-chain to off-chain by designing a protocol to establish a truthful, consistent, continuous and real-time mapping between the physical object and its digital entity. We analyze the time upper bound for achieving data consistency and continuity considering general blockchain systems.

\item Our system is invulnerable to remote threats. Unless physically destroyed, any tampering with the system or the monitored physical objects is to be recorded with non-repudiation and traceability. We further provide a fault-tolerant and error recovery solution against record losses caused by transmission failures without relaxing any security requirement. We lastly discuss existing attacks against different TEE platforms, and why our system is prone to these attacks.
\end{enumerate}

The rest of this paper is arranged as follows. Section \ref{sec:def} presents the necessary definitions and basic assumptions. Section \ref{sec:PPE} outlines the vaccine tracing system under our consideration to illustrate our design. Section \ref{sec:main} provides the in-depth technical details of and comprehensive analyses on our TEE-enabled trusted environment monitoring system and the corresponding consistency protocol. Section \ref{sec:exp} demonstrates a full system implementation and reports our experimental results. Section~\ref{sec:secfuture} discusses the extensibility and possible limitations of our framework and outlines our future research to mitigate the limitations. We finally summarize the related work in Section \ref{sec:related} and conclude the paper in Section \ref{sec:conclusion}.

\section{Definitions and Models}
\label{sec:def}
In this paper, we intend to extend the on-chain trust to off-chain, and establish a digital entity that describes a physical object in a truthful, real-time and continuous manner. To achieve this goal, we need to define what is a physical object and how its status is fractured into physical events of interests, how these events are captured and stored in blockchain in the form of digital timed records, and how a consecutive sequence of timed records eventually forms a digital entity. We lastly define the trustworthy standard of a digital entity, which includes truthfulness, real-time consistency and continuity.


\subsection{Basic Definitions}
\label{sec:defs}

A \textit{physical object} is a minimal unit in the physical world whose status interests us. It could be one or a batch of homogeneous things, or even a small physical area.

A \textit{physical event of interest} (hereafter abbreviated as an event) is a status of a physical object at a certain moment of time in which the public is interested in. An event can be  1) a temporal checkpoint in a periodical monitoring setting, or 2) a sudden change in sensed data (``change-of-status'') that exceeds a maximum tolerable range in an event-triggered monitoring setting.

A \textit{digital timed record} (hereafter abbreviated as record) is a captured result of an event by a set of sensors. A record is denoted as $(t, C, \pi)$, where $t$ is the timestamp when this record is created, $C$ is a combination of different types of sensor outputs such as binary, numerical, categorical or multimedia, and $\pi$ is a proof of authenticity of the record or (partly) the trustworthiness of the data. By this way one can see that  a record is a digital description towards an event. Although there exists two types of events -- periodical checkpoints and sudden change-of-status, in this paper we refer the former as \textit{ records} while the latter is processed as an independent alarm message. This is because we make use of periodically sensed records as the main source of evidence to describe a physical object. 


A \textit{digital entity} $\alpha$ can then be formulated as a chain of chronologically consecutive records. It fully captures all events of a physical object and describes its state transitions. 
Under the periodical monitoring setting, the continuity is guaranteed by sensing and reporting records periodically once every $\Delta t$ time. Under the event-triggered monitoring setting, an entity is constructed with significant change-of-status records. 
One can see that both the periodical and event-driven monitoring settings are desirable in practice, thus in this paper
we consider a hybrid one -- we use a periodical monitoring setting to record data but permit alarm messages to be sent out at anytime upon receiving a significant ``change-of-status'' event. A digital entity $\alpha$ can be denoted as
\begin{equation}
\label{eq:digitalentity}
\alpha = (t_0, C_0, \pi_0), \cdots, (t_i, C_i, \pi_i), \cdots
\end{equation}

With this hybrid setting, one needs to enforce the following three mandatory requirements on a digital entity in order to confidently believe that the entity does depict a physical object with trustworthiness: i) all associated records should truthfully reflect reality of the physical world; ii) each record should be captured in real-time after an event occurs; and iii) the records should be uploaded to blockchain without loss or significant time delay. They can be formalized into the following definition:
\begin{definition}[Trustworthiness of a digital entity]
\label{def:trust}
	We say a digital entity is \textit{trustworthy} if it satisfies all the following three requirements:
	\begin{itemize}
		\item \textbf{Truthfulness}: each record is generated by a secure system or program that is tamper-proof.
		\item \textbf{Real-time consistency}: For any event occurred in the physical world at time $t$, it takes at most $\delta$ more time for the blockchain system to return the same timed event, where $\delta$ is a small real number that is application-specific.
		\item \textbf{Continuity}: The time interval between two consecutive records of an entity is confined within $\hat{\delta}$, where $\hat{\delta}$ is a real number that does not significantly deviate from $\Delta t$, the sensing interval.
	\end{itemize}
\end{definition}


Regarding a physical event of interest, we judge whether or not the event is ``normal'' by comparing the sensing result $C$ against a pattern $\Gamma$, which defines the correct pattern of normal sensing data. In practice, $\Gamma$ and $C$ share the same structure, and the checking can be done by computing the deviation between $C$ and $\Gamma$. The combination of $C$ and $\Gamma$ can describe the basic detail of a physical event and how desirable or how legal it is. More discussions can be found in Section \ref{sec:PPE}. As one can see, these definitions mandate a truthful, robust, and highly-responsive system that continuously enforces monitoring towards a physical object, and then uploads the data to blockchain with low latency. This process must be efficient enough to satisfy the real-time requirement. We realize this goal by designing a consistency protocol and implementing a trusted environment monitoring system leveraging TEE, as detailed in Section \ref{sec:main}.

We lastly define a general blockchain system as a distributed network consisting of $n$ processors (or nodes), in which at most $f$ of them may be \textit{byzantine} (capable of exhibiting arbitrary behaviors), and the other $n-f$ are \emph{correct} (faithfully following the protocol without any fail-stop, omission or byzantine failure).  We consider modeling a general blockchain using the partially-synchronous model \cite{dwork1988consensus}\cite{1984Consensus}\footnotemark{}, \footnotetext{Solving fault-tolerant, deterministic consensus in asynchronous networks has been proven as an impossibility \cite{fischer1985impossibility}. While many works claimed achievement of asynchronous consensus, they in fact adopted a weak, or partially-synchronous assumption, requiring the network to synchronize periodically (e.g., Delegated PoS \cite{eoswhitepaper}) or under an exist-but-unknown-a-priori upper bound (e.g., PoW \cite{nakamoto2019bitcoin}).} \footnotetext[2]{In Dwork \textit{et al.}'s original definition \cite{dwork1988consensus}\cite{1984Consensus}, the \textit{defined} GST is unknown to the processors but a message communication upper bound $\Delta$ exists after a GST is announced and this upper bound stays valid during a time interval $\mathbf{L}$, meaning the network achieves temporary synchrony. This means that the corresponding model excludes the case where $\Delta GST$ is a constant and GST is known. In our definition, we simplify and generalize this model by setting no constraint on GST, but $\Delta GST$ can be a constant or a random variable with a known, finite upper bound, and could potentially follow a distribution (but we do not mandate it). Our GST is just a moment at which a new block production process is initiated.} which permits periodical synchronization at Global Stabilization Times (GSTs). After each GST, the blockchain system is temporarily synchronized for $\mathbf{L}$ units of time. During each [GST$_i$, GST$_i$+$\mathbf{L}$) interval, a new round of consensus is initiated, executed and completed by committing a new block to all correct nodes -- we call this a \textit{block production process}. In real life, a GST is usually the moment at which a request or a command to initiate a new block production process is issued. The blockchain system can receive new transactions at anytime, but these transactions need to wait to be committed; thus we assume that all data received during [GST$_i$+$\mathbf{L}$, GST$_{i+1}$) are to be included in the next block at depth $i+1$. The interval between two GSTs is denoted by $\Delta GST$. Without loss of generality, $\Delta GST$ can be a constant value or a random variable with a known finite upper bound -- respectively depicting cooperative consensus that produces blocks regularly and deterministically (such as Delegated Proof-of-Stake), and competitive consensus that produces blocks with a degree of uncertainty (as in Proof-of-Work and other mining-based consensus algorithms)\footnotemark{}.




\subsection{Trust Model}

In this study, we trust that the TEE hardware is secure against any long-range vulnerability, and that blockchain cannot be manipulated by an adversary.

\textbf{Blockchain:} Generally, a blockchain system is \emph{secure} if its consensus output is secure against adversaries' malicious manipulations and the consensus output does not change in any node's view after finalization. These two requirements can be described as: 1) the protocol is proved secure and is correctly implemented, and is only vulnerable to node byzantine faults; and 2) the number of existing byzantine nodes does not exceed the maximum number $f$ tolerable by the blockchain system. A blockchain system is \emph{available} if all non-faulty data can be included in the blockchain within a finite time interval. Our blockchain is assumed to be both secure and available. With the fact that most non-trivial blockchain systems or consensus algorithms provide such guarantees, one can reasonably make this trust assumption. 

\textbf{TEE:} A Trusted Execution Environment (TEE) physically separates a secure zone from a non-secure zone (a.k.a. the rich-environment zone). Programs in the secure zone can only be called by the non-secure zone but cannot be modified or explicitly inspected. It is a general consensus among the security community that programs within the secure zone are invulnerable against long-range tampering. There is one master secret key $mas\_{sk}$ that is unique to each TEE system and can be used to exclusively authenticate this trusted device to the public. This master secret key cannot be explicitly retrieved or tampered with. We assume the TEE hardware has a public/private key pair, denoted by $TEE\_{pk}$ and $TEE\_{sk}$, as a blockchain client needs to be implemented in TEE. Note that we consider physical damages towards TEE out of scope.

The controversy on TEE security lies in different TEE architectures, and we choose the Arm Cortex-M series TEE microcontroller since other choices of TEE such as the Cortex-A series and Intel SGX, which are prevalent in most smartphones and servers, do not strictly enforce physical isolation on the secure zone, making the system insecure  as evidenced by the many reported security attacks against program integrity \cite{murdock2020plundervolt} and private data secrecy \cite{reinbrecht2016side}. Cortex-M microcontrollers enforce strict physical isolation through mandatory manual configuration and only side channel attacks in a macro scale (such as power analysis) can be done with high time cost (estimated more than 100 to 300 seconds), while the attack results can only compromise some secrecy but not program integrity. In later discussions we explain how one can thwart this attack through memory masking or simply prohibiting possible attack windows from lasting longer than 100 seconds. We provide detailed analysis and comparison studies on state-of-the-art attacks and defenses on TEE security in Section \ref{sec:secTEE}.


\subsection{Notations}

We organize all notations and symbols in Table 1.

\begin{table*}
	\caption{Summary of Notations}
	\label{table:summary:notation}
	\centering
	\begin{tabular}{c|c}
		\hline
		\textbf{Symbol} & \textbf{Description} \\
		\hline
		$t, \Delta t$ & the time stamp at which a record is created; the sensing interval \\ \hline
		$T, \Delta T$ & the moment a record is available in blockchain; time difference between two consecutive $T$s\\ \hline
		$C$ & a combination of sensor outputs; in this paper we have: \\
		$(L, K, x, y)$ & brightness level, temperature, and GPS location $(x,y)$\\ \hline
		$\Gamma$ & a valid range of sensor output, namely pattern; in this paper we have:\\
		$\Gamma_P = \{0/1\}$ & binary photosensor output, 0 or 1 for dark or bright\\
		$\Gamma_K=  [K_{min}, K_{max}]$ & temperature sensor output, minimum and maximun permitted temperature\\
		$\Gamma_G=\{(X_i,Y_i), r_i\}$ & GPS locator output, a list of checkpoint locations and the permitted deviation radius\\ \hline
		$\pi$ & proof of data authenticity and trustworthiness\\ \hline
		$n, f$ & number of all nodes and that of byzantine nodes in blockchain\\ \hline
		GST, $\Delta GST$ & Global Stabilization Times; time difference between two consecutive GSTs \\ \hline
		$\mathbf{L}$ & guaranteed synchronized period after GST \\ \hline
		$F_d$ & backup queue for unsuccessfully uploaded data \\ \hline
		$mas\_{sk}, TEE\_{pk}, TEE\_{sk}$ & globally unique master secret key of each TEE and its derived public and private key pair \\
		$sym\_{sk}$ & derived session symmetric key between TEE and a proxy blockchain client\\
		$recp\_{pk}, recp\_addr$ & proxy blockchain client public key and network address\\ \hline
		$\phi$ & idle time from a transaction joining the blockchain pool to the moment next block production process starts\\
		$l_1$ & time interval of a  block production process\\
		$l_2$ & time interval between block production process finishes to synchronization period expires\\ \hline
		$\epsilon_1$ & time delay of local data sensing in the TEE system\\
		$\epsilon_2$ & time delay of transmission between the TEE system and the remote blockchain agent\\
		$\epsilon_3$ & time delay of block commit from  the remote blockchain agent  to all the blockchain nodes \\
		$\epsilon_s$ & clock difference between the TEE system and the remote blockchain agent\\
		$\epsilon$ & total time delay from a physical event's occurrence to the digital record ready in blockchain \\ \hline
		$\delta$ & upper bound of time delay between any event's occurrence to its corresponding record's global commit \\
		$\hat{\delta}$ & upper bound of time delay between any two consecutive records \\
		\hline
	\end{tabular}
\end{table*}

\section{A Vaccine Tracing System Example}
\label{sec:PPE}

In this section we introduce the vaccine tracing system as an example to demonstrate our framework of extending trust from on-chain to off-chain physical world. More specifically, our scheme is designed to make sure that the environment within the vaccine shipping box is monitored trustworthily in realtime. 

We consider that an important vaccine is stored inside an insulation box. This box must be transported from A to B via an approved route. The box is sealed and is prohibited from being opened. The temperature inside the box must be kept low and stay stable in order to preserve the biological activity of the vaccine. We start by defining possible situations that may violate the security requirements:
\begin{enumerate}
    \item the box is opened (may destroy or replace the vaccine),
    \item the temperature within the box is abnormal (may nullify the vaccine biological activity),
    \item the transportation route is deviated from the predefined one (same as 1),
    \item records may be lost (same as 1).
\end{enumerate}

To capture these violations, we place a photosensor, a temperature sensor, and a GPS locator inside the box. As discussed earlier, pattern $\Gamma$ needs to be predefined to determine whether the data $C$ complies with the security requirements or violates them. In this vaccine tracing system, the photosensor should always output 0 for constant darkness, indicating the box is sealed, thus we can define $\Gamma_P = \{0\}$; the actual binary data of the photosensor is denoted by $L$, and  $L=1$ if the ambient brightness is greater than the luminous threshold $\theta$ and $L=0$ otherwise. The temperature sensor should always report a steady internal temperature ranging from $K_{min}$ to $K_{max}$, so $\Gamma_T = [K_{min}, K_{max}]$; the actual capture of the temperature sensor is $K$ degrees. The GPS sensor records the physical shipping route of the box, and should follow a geographic pattern from the origin, along a reasonable path, to the final destination. We define $\Gamma_G = \{(X_i,Y_i), r_i\}$, which includes a series of checkpoints $(X_i,Y_i)$ by latitudinal and longitudinal coordinates and a safe radius between the box and the closest checkpoint. The latitudinal and longitudinal coordinates captured by the GPS sensor are respectively denoted as $x, y$. If the Euclidean distance between each location upload $(x_i,y_i)$ and the nearest checkpoint $(X_i,Y_i)$ is less than $r_i$, the box is considered in a safe route. The actual sensing data to be uploaded for each sensing event is denoted by:
\begin{equation}
    C = (L, K, x, y)
\end{equation}

We would like to point out that one could use more types of sensors to monitor the box during transportation. For example, one can use a smart lock, a motion sensor, and a humidity sensor, to enhance the monitoring effect of the vaccine box. These sensors are not hard to add-on, as one can see from next section that our design philosophy can be easily applied to them. This study represents our exploratory effort towards trustworthy vaccine shipment and for demonstration purpose we focus on a simple example and present the most essential components, leaving other opportunities to future real world system developers.


\section{Main Scheme: Trust Extension from On-Chain to Off-Chain Physical World}
\label{sec:main}

In this section, we detail the design and implementation of our trust extension scheme using the vaccine tracing as a case study example. Our system consists of a TEE-enabled trusted environment monitoring system and a consistency protocol for uploading data from the TEE system to blockchain in an orderly fashion. As discussed earlier, vaccines may suffer from counterfeit, physical damage, being unsealed or replaced by fake ones during transportation. Nevertheless, it is extremely challenging to seamlessly monitor their transportation in a trustworthy way in practice. This ``physical intraceability'' is common in many security-sensitive real world applications. We propose a system that permits secure and trustworthy vaccine shipping, which can capture any tampering and violation to the physical object, guaranteeing non-repudiation and traceability. We also present a fault tolerance mechanism that can recover from lost packets due to transmission failures, ensuring seamless monitoring of the vaccine transportation.

\begin{algorithm}[!htb]
\caption{Main System Utilities}
\label{mainsystem}
\begin{algorithmic}[1]
\State //\textbf{Symmetric Key Derivation and Distribution}
\State \textbf{Function} \texttt{KDF}($mas\_{sk}, recp\_{pk}, recp\_addr$) \\
//Master secret key of the system, recipient (blockchain client) public key, recipient address. \\
//Symmetric encryption for efficiency, asymmetric signature for public blockchain verification.
\Indent
\State $sym\_{sk}$ = $Enc_{\text{AES}}$($mas\_{sk}||recp\_{pk}||\text{TRNG}()$)
\State //Derive session key $sym\_{sk}$ from master key $mas\_{sk}$
\State $distribution = Enc_{\text{RSA}}(sym\_sk, {recp\_{pk}})$
\State $signature = Sign_{\text{RSA}}$($distribution$, ${TEE}\_{sk}$)
\State 4Gsend($distribution$, $signature$, $recp\_{addr}$)
\EndIndent
\State \textbf{return} $sym\_{sk}$
\\

\State //\textbf{Retrieving Sensor Data}
\State \textbf{Function} \texttt{Sensors\_get}($\theta$)
\State \textbf{Initialization:} $L=0, K=0, x=0, y=0$
\Indent
\If {($app\_ambient\_lum() > \theta$)} $L$=1  \\
\qquad //Hardware level logic
\EndIf
\State $K$ = $app\_temp\_get()$
\State $(x,y)$ = $app\_gps\_get()$
\EndIndent
\State \textbf{return} $C=(L,K,x,y)$
\\

\State //\textbf{Violation Detection}
\State \textbf{Function} \texttt{Violation\_check}($C, \Gamma_P,\Gamma_T,\Gamma_G$)
\State \textbf{Initialization:} ${flag}_P={flag}_T={flag}_G=0$, $msg = null$
\Indent
\If {($C.L!=\Gamma_P$)} \\
\qquad {${flag}_P$=1, $msg.append (\text{``Box opened''})$ }
\EndIf
\If {($C.K> \Gamma_T.K_{max}\ ||\ C.K< \Gamma_T.K_{min}$)}\\
\qquad {${flag}_T=1$, $msg.append (\text{``Abnormal Temperature''})$ }
\EndIf
\If {($\min \{ dist((C.x, C.y),(\Gamma_G.X_i, \Gamma_G.Y_i)) - \Gamma_G.r_i\} > 0$)} \\
\qquad {${flag}_G=1$, $msg.append(\text{``Route Deviated''})$ }
\EndIf
\If {(${flag}_P||{flag}_T||{flag}_G$)}
\qquad  \State $t$ = $get\_sys\_clock()$, $msg.append(t)$
\qquad \State 4Gsend($msg, Sign_{\text{RSA}}(msg, TEE_{sk}), recp\_addr$)
\EndIf
\EndIndent
\State \textbf{return} 0
\\

\State //\textbf{Send Data Packets via 4G} 
\State \textbf{Function} \texttt{4GSend}($msg, recp\_addr$)
\Indent
\State $app\_4G\_send(msg, recp\_addr)$
\If{(not receiving $\texttt{ACK}$)} {\textbf{return} `fail'}
\State \textbf{else return} `ok'
\EndIf
\EndIndent

\end{algorithmic}
\end{algorithm}

\subsection{TEE-Enabled Trusted Environment Monitoring}
We first introduce our full system of TEE-enabled trusted environment monitoring system. Fig.~\ref{mainsystem} presents the major utilities. We develop our system from the bare metal level for best security guarantee and efficiency/cost performance.

As discussed earlier, the secure zone inside a TEE hardware has the highest security privilege through physical isolation. Programs implemented in the secure zone can only be called by non-secure zone through a callable-API, and cannot be modified or inspected by the non-secure zone. Therefore, the secure zone should execute security-critical tasks such as trusted data collection, pattern extraction, encryption and decryption. The non-secure zone can perform non-security tasks such as user interface or packet routing. To initialize, we manually configure the Secure Attribution Unit (SAU) in the secure zone to partition the secure-zone memory from non-secure zone memory.  To make the system function as expected, we first develop drivers of security-critical sensors within the secure zone and directly connect them to the corresponding devices by wire; then we assign higher system interrupt priorities to security-critical tasks in order to achieve better real-time performance. Fig.~\ref{fig:blockdiagram} demonstrates the system block diagram with basic components introduced as follows.

\begin{figure}[!htb]
	\centering
	\includegraphics[width=0.45\textwidth]{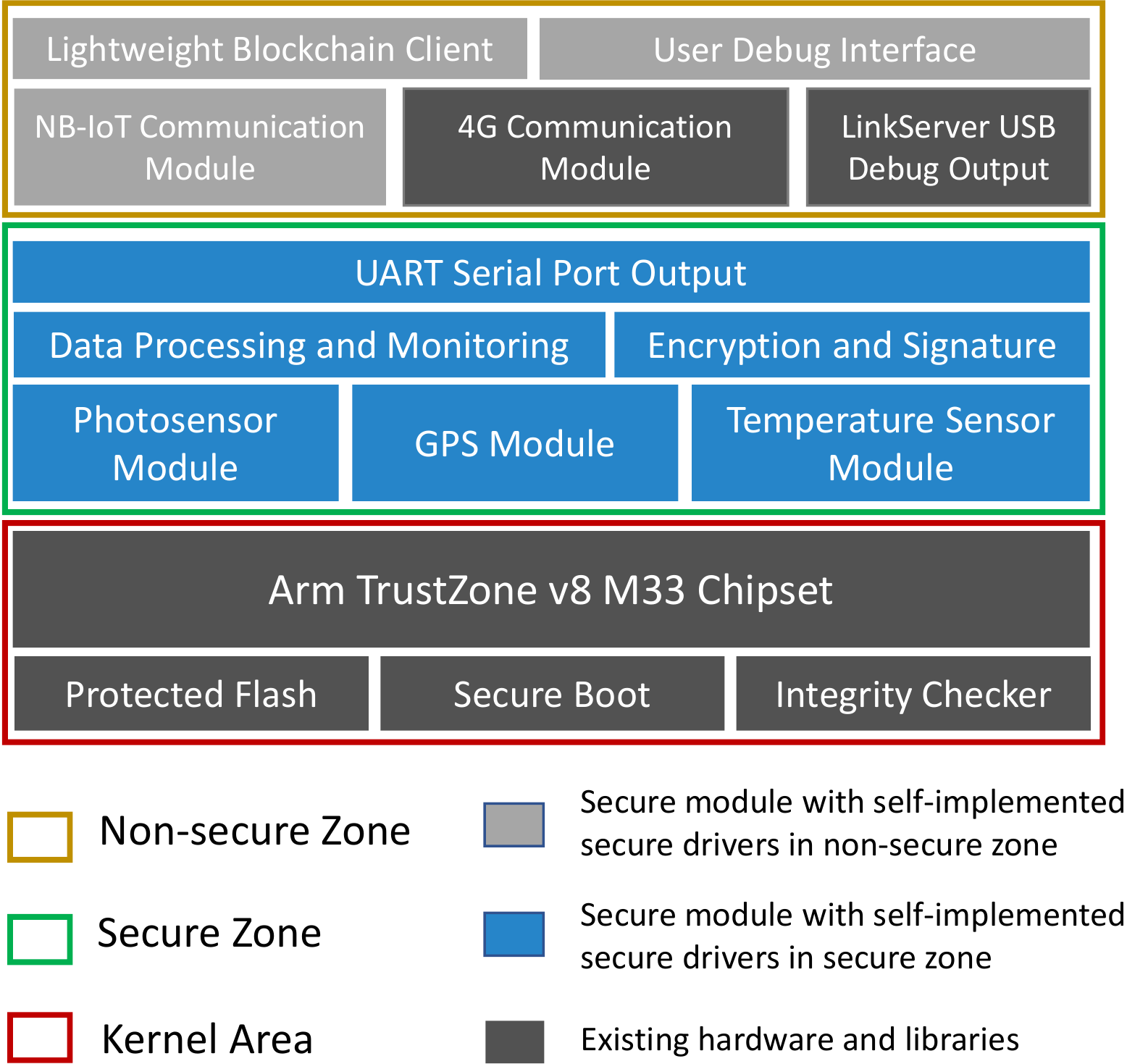}
	\caption{Block Diagram of the System}
	\label{fig:blockdiagram}
\end{figure}

At the lowest kernel level, a secure boot module and an integrity checker are first activated after booting. These two modules validate the current image (burnt-in executable) by computing its RSA signature and comparing it with the pre-stored correct one. If no corruption is found, the CPU control is transferred to the secure zone user space. The Protected Flash Region (PFR) stores the master secret key $mas\_{sk}$ and the key pair $TEE\_{pk}$ and $TEE\_{sk}$, and must be checked for integrity by secure boot. If passing the check, which means that $mas\_{sk}$ and the key pair in PFR are intact, they are retrieved from PFR and loaded into the secure zone RAM. One can see that these keys cannot be faked, corrupted, or stolen from the PFR and the secure zone. The master secret key $mas\_{sk}$ is used to derive session keys for AES encryption while the public key pair  is used for message authentication and signature verification. The key pair $TEE\_{pk}$ can be registered through its owner onto the blockchain or managed through public key infrastructure.

At the secure zone we develop and assemble the drivers of the photosensor, GPS, camera, and temperature sensor in C language on our own. We also define critical logic and parameters such as the raw data collecting procedure, buffer size, inner capture frequency, debug procedures, and so on. This is reflected by \textbf{Function} \texttt{Sensors\_get($\theta$)} in Algorithm \ref{mainsystem}, where $\theta$ is a threshold to ensure that  the photosensor returns binary 1 if and only if its reading is above $\theta$.

The data processing and monitoring module collects data from the sensors as $C$ then compares the data with the predefined legit pattern $\Gamma$. If this checking finds any violation against $\Gamma$ defined in Section \ref{sec:PPE}, an immediate alarm message along with the current timestamp is generated. Note that this alarm message is directly sent out and does not affect normal data uploading. This is shown as \textbf{Function} \texttt{Violation\_check}($C, \Gamma_P,\Gamma_T,\Gamma_G$) in Algorithm \ref{mainsystem}. Next we retrieve the current system clock $t$, and packs all into a correct data structure $(t, C, \pi)$, where $\pi$ is the publicly verifiable signature computed using $TEE\_{sk}$ to authenticate the trusted source of data $C$. Finally we check if there exist any lost history data, and if so, include them as $F_d$; then we sign and encrypt $F_d||(t, C, \pi)$ with the session key $sym\_{sk}$ and call \texttt{4GSend()} presented in Algorithm \ref{mainsystem} to upload the data to the blockchain. This procedure is summarized by Algorithm~\ref{mainprotocol} and the flow chart of the data uploading procedure is illustrated in Fig.~\ref{fig:flowchart}.


\begin{figure*}[!htb]
	\centering
	\includegraphics[width=0.96\textwidth]{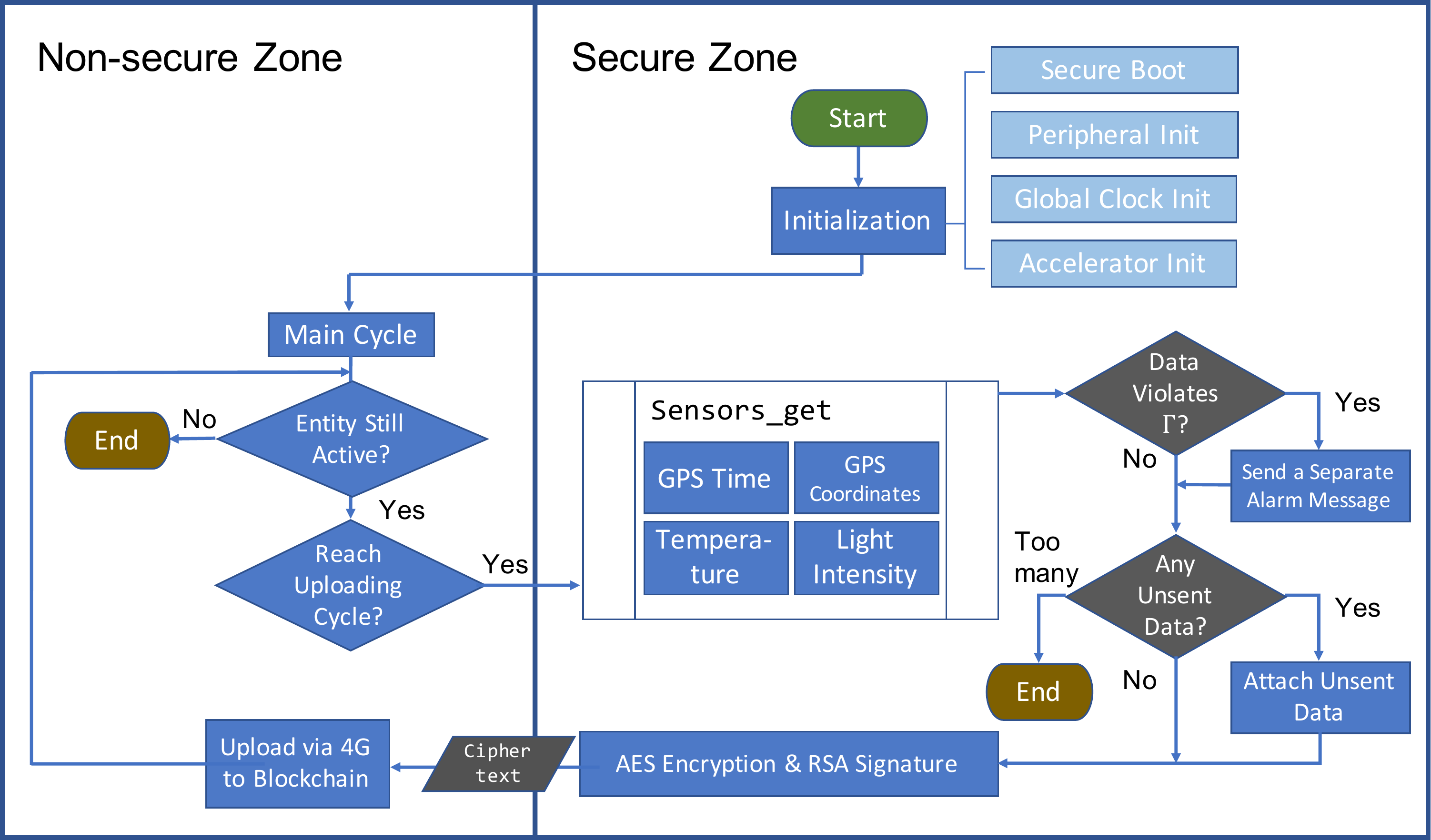}
	\caption{Flowchart of the System}
	\label{fig:flowchart}
\end{figure*}

The encryption and signature module performs encryption, signature signing and verification. Recall that the system has a master secret key $mas\_{sk}$ and a public/private key pair $TEE\_{pk}$ and $TEE\_{sk}$. To save computational resource, we need a symmetrical session key for encryption with AES. This symmetric key $sym\_{sk}$ can be derived as an AES output of encrypting the concatenation of the master secret key $mas\_{sk}$, the recipient public key $recp\_{pk}$, and a  random number which is generated by the embedded True Random Number Generator (TRNG). AES encryption is a common source of randomness to derive symmetric keys. Since it is CCA-secure\footnote{Using Authenticated Encryption modes such as GCM.}, attackers cannot have more than negligible probability to infer $mas\_{sk}$ providing $sym\_{sk}$. We then encrypt the symmetric key $sym\_{sk}$ once using the remote blockchain agent's public key to securely deliver $sym\_{sk}$. Note that it is the remote blockchain agent's liability to keep the symmetric key secret. This function is shown in \textbf{Function} \texttt{KDF}($mas\_{sk}, recp\_{pk}, recp\_addr$).

Finally in the secure zone, a Universal Asynchronous Receiver/Transmitter (UART) module receives  messages and sends them to the non-secure zone. 
The 4G or NB-IoT communication module, or the LinkServer local USB debug output, receives the data from the UART module. For time-sensitive applications the 4G module can be used while for energy-efficiency-sensitive applications the NB-IoT module can be used. The lightweight blockchain client in TEE provides the current remote blockchain agent server's address to which data can be redirected after entering the Internet. A user debug interface connects UART to USB and permits output via PC at the specific IDE. Fig.~\ref{fig:TEEArchitecture} shows another view of the system.

\begin{figure}[!htb]
	\centering
	\includegraphics[width=0.45\textwidth]{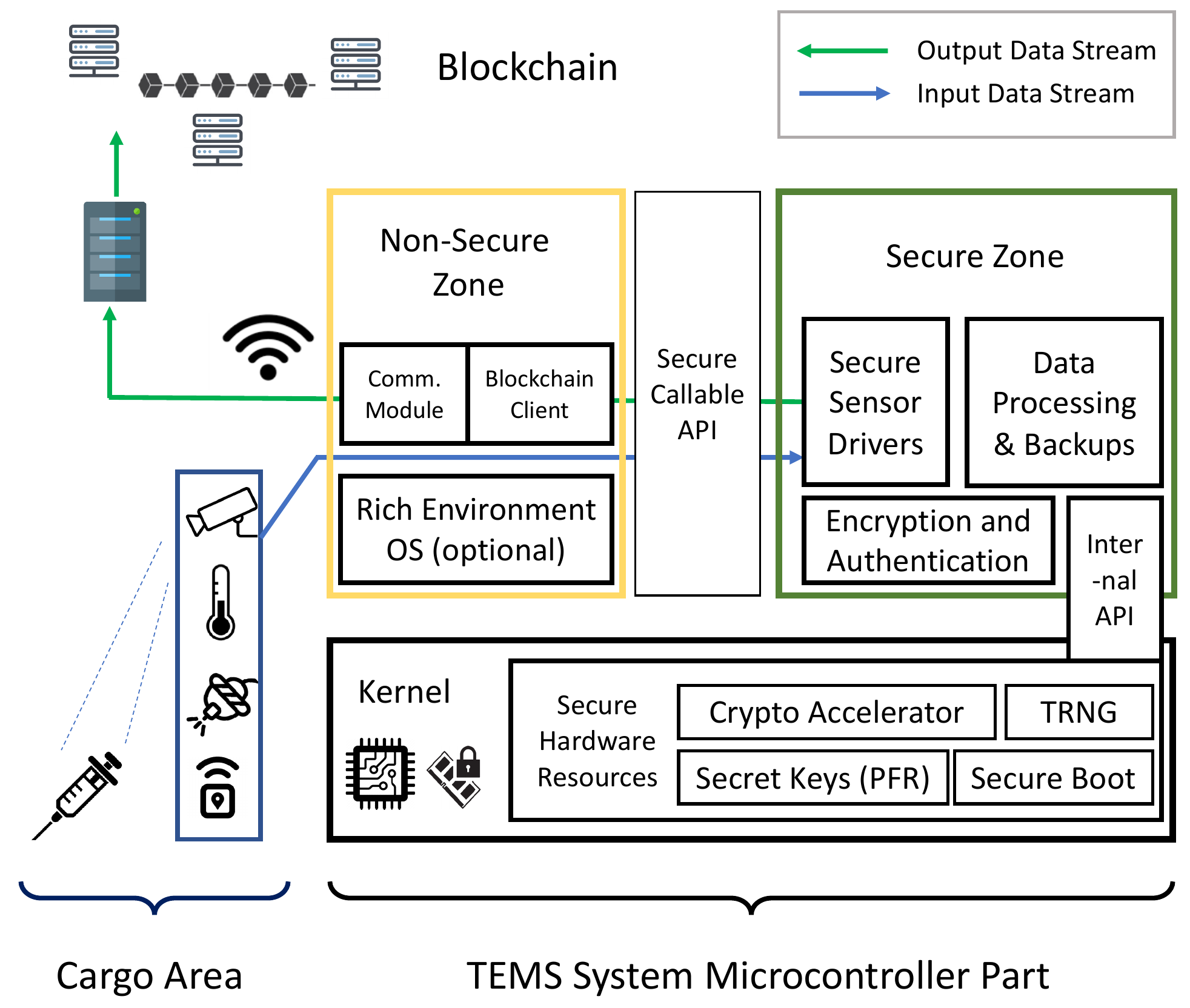}
	\caption{Abstract Architecture of System}
	\label{fig:TEEArchitecture}
\end{figure}

We notice that there exist other works that use TEE to perform trustworthy operations. However, the adopted TEE hardware chipsets either are prohibitively expensive as the more-easy-to-implement but expensive ones such as the Intel SGX are adopted, or make use of the ones such as the Cortex-A series that do not strictly enforce the secure zone physical isolation. In this study, we implement our system using the Cortex-M series TrustZone Chipset as it is the first series that physically divides secure zone from the non-secure zone. Not surprisingly,  we face a great engineering challenge and overhead because there exist very few usable libraries, kernels, and operating systems developed for the Cortex-M series. As a result, we build the whole system from the bare metal level using C and Assembly. To the best of our knowledge, we are the first to implement such a trustworthy system using the challenging yet cheap Cortex M33 MCU, which is priced around \$40, while most other Intel SGX products cost around \$300 {\raise.17ex\hbox{$\scriptstyle\sim$}} \$400.

\begin{algorithm}
\caption{Data Uploading Protocol}
\label{mainprotocol}
\begin{algorithmic}[1]
\State \textbf{Input:} $recp\_addr, \max\_F_d$, $\theta$, ($\Gamma_P$, $\Gamma_T$, $\Gamma_G$) //Remote blockchain agent address, max number of packet resend tolerance, threshold of photosensor, predefined security pattern.
\State \textbf{Initialization:} Initialize system hardware; activate blockchain; synchronize system clock and blockchain client with external GPS time; initialize message queue.
\State $sym\_{sk}$=\texttt{KDF($mas\_{sk}, recp\_{pk}, recp\_addr$)}
\State $F_d=$ null
\While{(It is time for periodic reporting)}
\State $C$ = \texttt{Sensors\_get}($\theta$)
\State \texttt{Violation\_check}($C, \Gamma_P,\Gamma_T,\Gamma_G$)
\State $t$ = $get\_sys\_clock()$
\State $\pi$ = $sign_{\text{RSA}}(C,{TEE}_{sk})$ //New data
\If {(!queue.empty() \&\& len(queue)$\leq \max\_F_d$)}
\While{(!queue.empty())}
\State {$F_d$.append(queue.front())
\State queue.pop()}
\State //Get the failed-to-send historical data array
\EndWhile
\Else
\If {(!queue.empty() \&\& len(queue)$ > \max\_F_d$)}
\State \textbf{return} `Exceed maximum recovery tolerance'
\EndIf
\EndIf
\State $F_d.append((t,C,\pi))$
\State $msg = Enc_{\text{AES}}(F_d, sym\_{sk})$
\State status=4Gsend($msg$, $recp\_addr$)
\If {{(status==`fail')}}
\For{(a : $F_d$)}
\State queue.push(a)
\EndFor
\EndIf
\EndWhile
\State \textbf{return} 0
\end{algorithmic}
\end{algorithm}

\subsection{Consistency and Continuity Analysis}
\label{sec:analysis}
As discussed earlier, one can see that to achieve trusted physical traceability, we must ensure that
\begin{enumerate}
    \item \textit{\textbf{Real-time consistency}}: any event that happens must be uploaded and available within $\delta$ time,
    \item \textit{\textbf{Continuity}}: a periodical monitoring at every $\Delta t$ time, and the time delay between any two consecutive records must be within $\hat{\delta}$ time.
\end{enumerate}

%
Algorithm \ref{mainprotocol} shows the pseudocode of the data uploading protocol. 
We now analyze the worst latency performance for both $\delta$ and $\hat{\delta}$ of this protocol\footnote{In our analysis, we ignore the local data processing time within a node or an agent as it is generally small and negligible.}.


Note that time latency $\delta$ is a sum of three random system delays, i.e., $\delta =\epsilon_1+\epsilon_2+\epsilon_3$, which are described as follows.
\begin{enumerate}
    \item ${\epsilon}_1$ is the delay of local data sensing in the system. It begins at the moment the system starts sensing, and ends at the time the secure zone outputs a signed ciphertext data $D=F_d||(t,C,\pi)$ as a digital record.
    \item ${\epsilon}_2$ is the delay of transmission between the TEE system and the remote blockchain agent. 
    It starts by the non-secure world of TEE receiving $D$ and ends at the time when the remote blockchain agent obtaining $D$.
    \item ${\epsilon}_3$ is the delay of synchronizing the records in the whole blockchain system. It starts by the first blockchain node (with the remote blockchain agent mentioned above) receiving $D$ and ends when all good blockchain nodes retrieving $D$ from the blockchain.
\end{enumerate}

Note that we use $\epsilon_s$ to denote the clock difference between the TEE system and the remote blockchain agent, as the latter actually proposes the record to blockchain.

Fig. \ref{fig:timeslotnormal} shows a normal operation of the system. As defined in Section \ref{sec:defs}, we assume a general blockchain under a partially synchronous model. Time $t_i$ is the local system timestamp at TEE when the event record \textit{should} be created, and $T_i$ is the time when blockchain has this event publicly available. The time gap $\Delta t$ is the desired period between two planned sensing events and $\Delta T$ is the time difference between two timed events recorded in blockchain. We know that the system latency $\epsilon_1$ can vary but is stable for a properly developed system, and the transmission latency $\epsilon_2$ should also be stable for a high-performance transmission protocol like 4G. The only variable that may significantly affect the latency is $\epsilon_3$, where $\epsilon_3=\phi+l_1$, with $\phi$ being the idle time that a record waits in the blockchain pool to the initiation of next block production process\footnote{Recall that for simplicity, we assume all received data will be included in the next block, which means that $\phi \in (0,\Delta GST]$. This can be relaxed if additional analysis is performed under different settings and assumptions.}, and $l_1$ being the time interval the block production process lasts, committing new blocks to all nodes. We call GST$+l_1$ the Block Commit Time (BCT). From BCT to the time when the synchronization period $\mathbf{L}$  expires is denoted by $l_2$, which satisfies $\mathbf{L}=l_1+l_2$. Without loss of generality, we assume that $l_1$ is a random variable in $(0,\mathbf{L}]$, which \textit{could potentially} follow a distribution (but we do not mandate it).

\begin{theorem} \label{theorem:latency1}
Assume that there is no record loss. For a sensing event started at time $t$, it takes at most $\delta = \epsilon_1+\epsilon_2+\Delta GST+\mathbf{L}+\epsilon_s$ time to have this sensed event available in blockchain.
\end{theorem}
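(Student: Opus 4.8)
The plan is to aggregate the three constituent delays and bound each against the quantities fixed by the partially-synchronous model, then reconcile the fact that the summands are naturally measured against different clocks. First I would invoke the decomposition $\delta = \epsilon_1 + \epsilon_2 + \epsilon_3$ established above, together with the refinement $\epsilon_3 = \phi + l_1$, so that the end-to-end latency reads $\delta = \epsilon_1 + \epsilon_2 + \phi + l_1$. The task then reduces to bounding $\phi$ and $l_1$ by model-level constants and accounting for the clock offset between the TEE and the remote agent.

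Next I would bound the two blockchain-side terms separately. For the idle waiting time, the no-loss assumption together with the convention that any received record is committed in the \emph{next} block production process gives $\phi \in (0, \Delta GST]$; the worst case is a record that reaches the agent's transaction pool just after a block production process has been initiated, so that it must wait almost a full inter-GST interval before the subsequent process begins, whence $\phi \le \Delta GST$. For the commit term, the assumption $l_1 \in (0, \mathbf{L}]$ (equivalently $l_2 \ge 0$ in $\mathbf{L} = l_1 + l_2$) yields $l_1 \le \mathbf{L}$, the extreme being a block production process that consumes the entire guaranteed synchronization window. Substituting both bounds gives $\epsilon_3 \le \Delta GST + \mathbf{L}$, and hence $\delta \le \epsilon_1 + \epsilon_2 + \Delta GST + \mathbf{L}$.

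The remaining step, and the one I expect to be the main obstacle, is the clock-reconciliation term $\epsilon_s$. The event timestamp $t$ is stamped on the TEE's secure clock, so $\epsilon_1$ and $\epsilon_2$ are naturally expressed in that frame; but $\phi$, the GSTs, and $l_1$ are blockchain-side quantities measured by the remote agent, which actually proposes the record. When one computes the externally observed latency as the difference between the blockchain-availability time $T$ and the TEE event time $t$, the offset between the two clocks contributes an additional error of at most $\epsilon_s$. I would therefore add $\epsilon_s$ as a conservative upper bound on this discrepancy, yielding the claimed $\delta \le \epsilon_1 + \epsilon_2 + \Delta GST + \mathbf{L} + \epsilon_s$. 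The delicate part will be arguing that a single $\epsilon_s$ suffices to cover the skew in the worst direction, i.e.\ that no compounding of the offset across the successive handoffs (secure zone $\to$ non-secure zone $\to$ agent $\to$ all correct nodes) can exceed one $\epsilon_s$; this relies on $t$ and $T$ being the only two timestamps entering the final elapsed-time computation and on the security guarantee that the availability time is pinned down once the block is committed to every correct node.
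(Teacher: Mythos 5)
Your proposal is correct and follows essentially the same route as the paper's proof: decompose the latency as $\epsilon_1+\epsilon_2+\phi+l_1$, take the worst case $\phi \le \Delta GST$ (record just missing the current block production deadline), the worst case $l_1 \le \mathbf{L}$ (consensus consuming the full synchronization window), and add $\epsilon_s$ for the clock difference between the TEE and the remote agent. Your treatment is in fact somewhat more explicit than the paper's, particularly in spelling out the ranges $\phi \in (0,\Delta GST]$ and $l_1 \in (0,\mathbf{L}]$ and in justifying why a single $\epsilon_s$ suffices, which the paper simply absorbs into the phrase ``considering the time synchronizing difference.''
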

\begin{proof}
The sensing event starts at time $t$, and after $\epsilon_1+\epsilon_2$ time the corresponding record reaches the blockchain. In the worst case, the record just misses the next block production deadline (GST) thus it must wait for $\Delta GST$ time to be included in the next block production process. Considering the worst case where the blockchain takes the longest permitted time to execute the consensus process, which is an extra $\mathbf{L}$ units of time, and the time synchronizing difference, one can get the result.
\end{proof}

\begin{theorem}\label{theorem:latency2}
Assume that there is no record loss. For two consecutive timed events recorded in blockchain, the maximum time difference is $\hat{\delta} = \Delta t + \Delta GST + \mathbf{L} +\epsilon_s$.
\end{theorem}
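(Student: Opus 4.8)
The plan is to follow the same accounting used in the proof of Theorem~\ref{theorem:latency1}, but applied to the \emph{gap} between the on-chain availability times of two consecutive records rather than to the latency of a single one. I would let the two consecutive sensing events start at TEE-local times $t_i$ and $t_{i+1}=t_i+\Delta t$ (periodic sensing at interval $\Delta t$), and let $T_i,T_{i+1}$ denote the moments their records become globally committed; the target is to bound $\Delta T=T_{i+1}-T_i$ by $\hat\delta$.

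First I would write each availability time as its sensing time plus the accumulated delays, using the decomposition $\epsilon_3=\phi+l_1$ introduced before the theorems: $T_j=t_j+\epsilon_1+\epsilon_2+\epsilon_s+\phi_j+l_{1,j}$ for $j\in\{i,i+1\}$, where $\phi_j\in(0,\Delta GST]$ is the idle wait for the next block production process and $l_{1,j}\in(0,\mathbf{L}]$ is the consensus duration. Since the sensing delay $\epsilon_1$ and the transmission delay $\epsilon_2$ are stable across two back-to-back events, subtracting cancels them and gives $\Delta T=\Delta t+(\phi_{i+1}+l_{1,i+1})-(\phi_i+l_{1,i})$. Hence the only terms that can widen the gap beyond $\Delta t$ are the waiting and consensus delays of the two records.

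The upper bound then follows by bounding these terms separately: $\phi_{i+1}+l_{1,i+1}\le\Delta GST+\mathbf{L}$ from the stated ranges, while $\phi_i+l_{1,i}\ge 0$, so $\Delta T\le\Delta t+\Delta GST+\mathbf{L}$ plus the clock-difference allowance $\epsilon_s$, recovering $\hat\delta=\Delta t+\Delta GST+\mathbf{L}+\epsilon_s$. Intuitively this is the scenario in which the earlier record is committed essentially instantaneously (arriving at a deadline and cleared by a fast round) while the later record just misses a deadline, waits the full $\Delta GST$, and is then committed by a maximally long $\mathbf{L}$-length round.

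I expect the delicate points to be twofold. The first is the bookkeeping of the clock offset $\epsilon_s$: a naive difference would cancel a stable offset, so I would keep $\epsilon_s$ as the worst-case, non-cancelling skew charged to the later record, consistent with how it enters Theorem~\ref{theorem:latency1}. The second---needed only if one wants the bound to be tight rather than merely an upper bound---is exhibiting a feasible schedule realizing the extremal alignment: because the inter-GST gaps are adversarial up to $\Delta GST$ under the partially synchronous model, one can place GSTs so that the record at $t_i$ is caught immediately while the record at $t_{i+1}$, arriving exactly $\Delta t$ later, lands just after a GST and incurs the full $\Delta GST+\mathbf{L}$. I would close by noting that the no-loss hypothesis is precisely what prevents retransmissions from inflating $\phi_{i+1}$ further, so the bound holds as stated.
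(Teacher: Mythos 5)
Your proposal is correct and follows essentially the same argument as the paper's proof: decompose $\Delta T = \Delta t + \Delta\delta$, take the best case for the earlier record ($\phi_i = l_1^i = 0$, instant commit) against the worst case for the later one ($\phi_{i+1} = \Delta GST$, $l_1^{i+1} = \mathbf{L}$, as in Theorem~\ref{theorem:latency1}), and account for the clock skew $\epsilon_s$. Your added care about the non-cancelling treatment of $\epsilon_s$ and the explicit extremal schedule only makes the same argument more rigorous.
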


\begin{proof}
The maximum time difference between two records exists between a best case record processing followed by a worst case one. According to the definition, $\Delta T = t_{i+1}+\delta_{i+1}-(t_i+\delta_i) = \Delta t + \Delta \delta$, where $\Delta t$ is the planned period between two sensing actions, and in this case $\Delta \delta$ is the max record processing time difference. In the best case, $t_i+\epsilon_1^i+\epsilon_2^i$ meets the next block production deadline (GST) and the blockchain takes an ideal instant time to finish the block production process, which means $\phi_i=l_1^i=0$ and therefore $\epsilon_3^i=0$. The worst case is analyzed in \ref{theorem:latency1} where $\phi_{i+1}=\Delta GST$ and $l_1^{i+1}=\mathbf{L}$. Considering the time synchronizing difference one can get the result.
\end{proof}

\begin{figure}[h]
	\centering
	\includegraphics[width=0.42\textwidth]{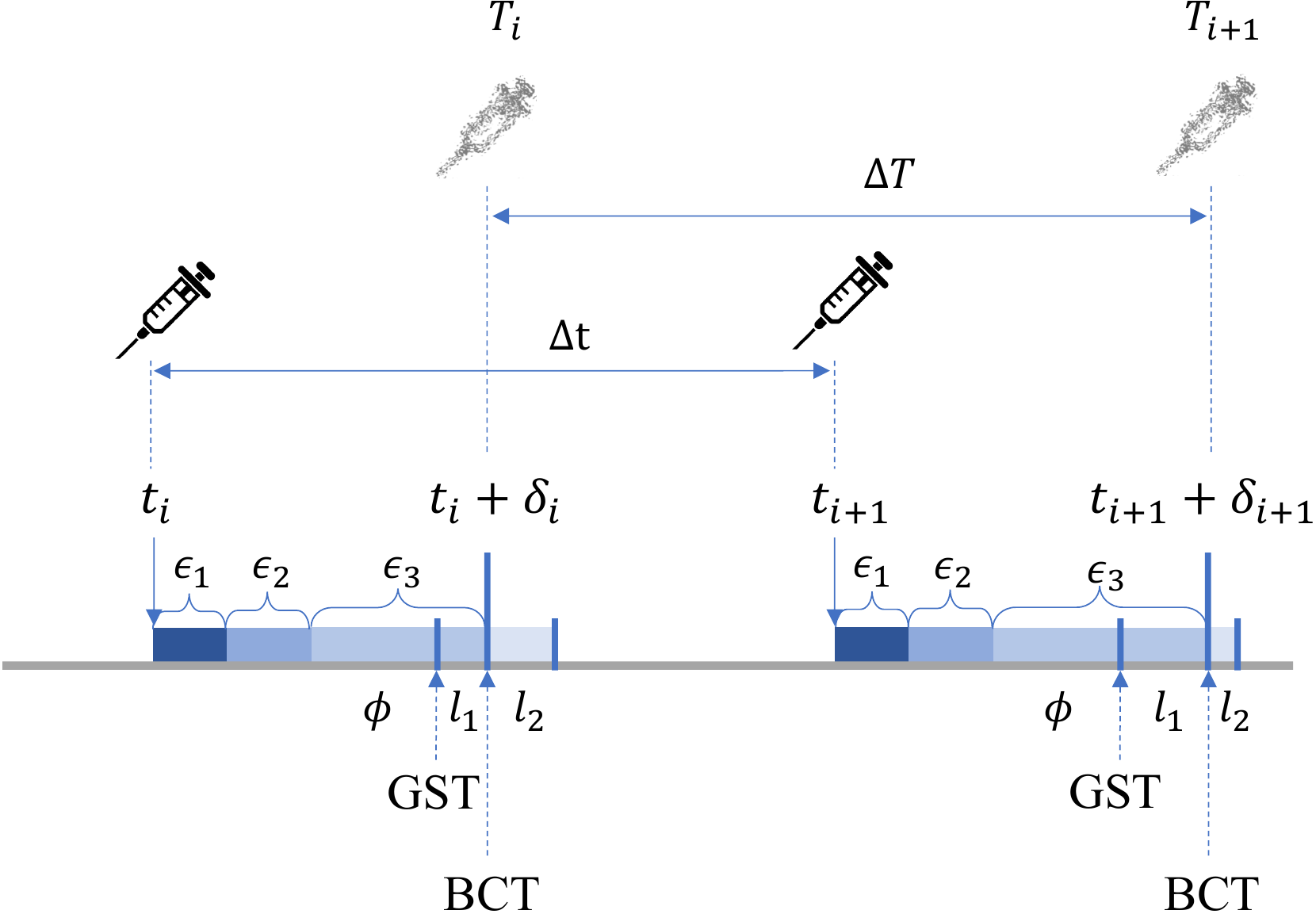}
	\caption{Data uploading illustration. Superscript $i$ and $i+1$ of $\epsilon, \phi, l_1$ and $l_2$ are omitted for conciseness; GST stands for the Global Stabilization Time and BCT stands for the actual Block Commit Time.}
	\label{fig:timeslotnormal}
\end{figure}

\subsection{Lost Record Tolerance and Recovery}
\label{sec:recovery}
The whole system can malfunction in many ways; thus we need to consider fault tolerance as well as recovery. The malfunction or halt of the TEE system itself must require manual inspection and restart, and that of the blockchain is out of the scope of this paper. Therefore we consider fault tolerance and recovery for the transmissions between the TEE system and the blockchain system, which are common in real world and doable at the software level. 
On the other hand, if communication fails and the system can prove to the blockchain that it did create the data record back in time as expected, those lost or late packets can be accepted without distrusting the device.

Let's examine Algorithm \ref{mainprotocol}  again. Recall that for each digital record $D_i=F_d||(t_i,C_i,\pi_i)$, the lost packets in $F_d$ are stored in a queue and retrieved as a list (see Lines 10-20 in Algorithm \ref{mainprotocol}).
%
If the system does not receive an \texttt{ACK} for a record from the blockchain (see Lines 23-28 in Algorithm \ref{mainprotocol}), it pushes the data into the backup queue and resends it at the next uploading cycle. Note that we allow a maximum of $\max\_F_d$ number of consecutive missed packets to be backed up and resent (see Lines 17-19 in Algorithm \ref{mainprotocol}). If the packets are correctly recovered, then all packets follow the correct $\Delta t$ pattern; thus we consider them acceptable and not conflicting with the validity defined earlier.

By this design, one can see that the system is invulnerable to the following two common attacks: 1) an adversary commits a jamming attack and therefore the 4G module cannot send to or receive from the outer world any information; and 2) the 4G signal is generally weak or the link is disconnected. Both cases are common in practice, and appear to have the device disconnected causing the records to be missed for a period of time. Note that even when the transmission signal is down, the secure zone is still constantly monitoring the environment inside the box and this backup queue cannot be modified as it stays inside the secure zone. Also note that the adversary may compromise the inner environment of the TEE system while jamming the communications. Nevertheless, the records still stay in the secure zone (inside the backup queue), though jamming prevents them from being sent out. Unless the adversary jams the system forever or physically destroys the system, this attack action will sooner or later be publicly revealed.

Nevertheless, the system may still be vulnerable under the following scenario in which  an adversary jams the system for less than $\max\_F_d$ number of recording times and meanwhile 1) successfully breaks into the system and modify the program within TEE, or 2) forges a signature without the secret key and impersonates this device to upload fake data, or 3) steals the secret key from side channel attacks such as power analysis, then creates a valid signature. The first two cases can be nullified by the following security assumptions: 1) a trusted hardware (TEE) protects its programs in the secure zone from modifications and 2) the unforgeability proof of a secure digital signature scheme states that no polynomial time adversary can forge a valid signature with a negligible probability. The last case is thwarted by prohibiting any long disconnection time in order to make the attack window impossible to exist. Such a case is further explained in the following subsection.


\subsection{State-of-the-Art of TEE Security}
\label{sec:secTEE}
In this paper, we ensure our \textit{truthful} requirement based on the assumption that programs and systems within TEE hardware is tamper-proof. We now compare the mainstream choices of TEE platforms and list out state-of-the-art attacks and defense mechanisms. We finally show that there does not exist an effective attack against our Cortex-M series TEE-based scheme that can be avoid detection.

The Cortex-A series microcontrollers and Intel SGX platforms are the most prevalent TEE choices in smartphones and servers, and they do not strictly enforce physical isolation between the secure zone and the non-secure zone. Cortex-A macro-schedules the memory space using virtual memory MMU at the software level, which is more vulnerable to long-range attacks. Intel SGX has a Model Specific Register (MSR) that is exposed to software control, which can be modified in a controlled way to flip bits in the memory, causing the attacker to gain escalated control privilege \cite{CVE-2019-11157}.  This led to multiple successful software-level exploits \cite{murdock2020plundervolt} \cite{kenjar2020v0ltpwn}, and even attacks after the vulnerability was patched \cite{chenvoltpillager}. Cortex-A series was also exploited since it has a shared memory cache \cite{spreitzer2013cache}\cite{reinbrecht2016side}.

On the other hand, a Cortex-M series microcontroller (MCU) explicitly micro-manages the memory space, which is considered a much more solid physical isolation. By the time of this writing, no software-level exploit against the Cortex-M series has been identified. At the hardware-level, Vafa \textit{et al.} launched a power analysis-based profiling attack to uncover the code within the Cortex-M3 core~\cite{vafa2020efficient}; and Petrvalsky \textit{et al.} employed differential power analysis (DPA) and FastDTW trace alignment to recover the secret 128-bit key from the Arm Cortex-M3 MCU~\cite{petrvalsky2014differential}. Nevertheless, hardware-level sabotage and side channel attacks are technically unavoidable and are beyond our consideration. Furthermore, these attacks are mainly against secrecy -- uncovering secret keys or codes, which are non-critical to integrity, as the integrity of the TEE system remains uncompromised and its functionality is not altered. But the recovery of the secret keys~\cite{petrvalsky2014differential} could seriously threaten the system trustworthiness. Nevertheless, this attack can be easily defendable via memory masking \cite{yoshikawa2011efficient}. Aside of that, Petrvalsky \textit{et al.} in~\cite{petrvalsky2014differential} did not report the total time needed for succeeding this attack. According to the descriptions in their experimental studies, during one attack, one trace alignment using FastDTW is processed, and each trace consists of about 150,000 samples. By looking up the time cost from the original paper of FastDTW~\cite{salvador2007toward}, one can calculate that the time cost for each single trace alignment could be somewhere between \textbf{100 to 300 seconds}, which is considered as the time cost for a single attack. We call this time duration the \textit{attack window}, which means that if the total duration of 4G disconnection for our environment monitoring system is greater than this attack window, the attack case 3) mentioned in the previous subsection could succeed. Therefore, we mandate an allowed disconnection time, denoted by the $\max\_F_d$ number of unsent packets, must be less than the attack window, and the existence of a disconnection time exceeding the attack window  implies that the device could be compromised and is distrustful. 
In reality, one can adjust $\max\_F_d$ by considering different communication methods (such as delay tolerant network, 2G networks, etc), different applications, and new attacks in future. To our best knowledge, we are the first to develop secure blockchain components using the Cortex-M series chipsets, and mandate high security standards to thwart all the existing major attack schemes targeting Cortex-M series.

\begin{figure}[!t]
	\centering
	\includegraphics[width=0.49\textwidth]{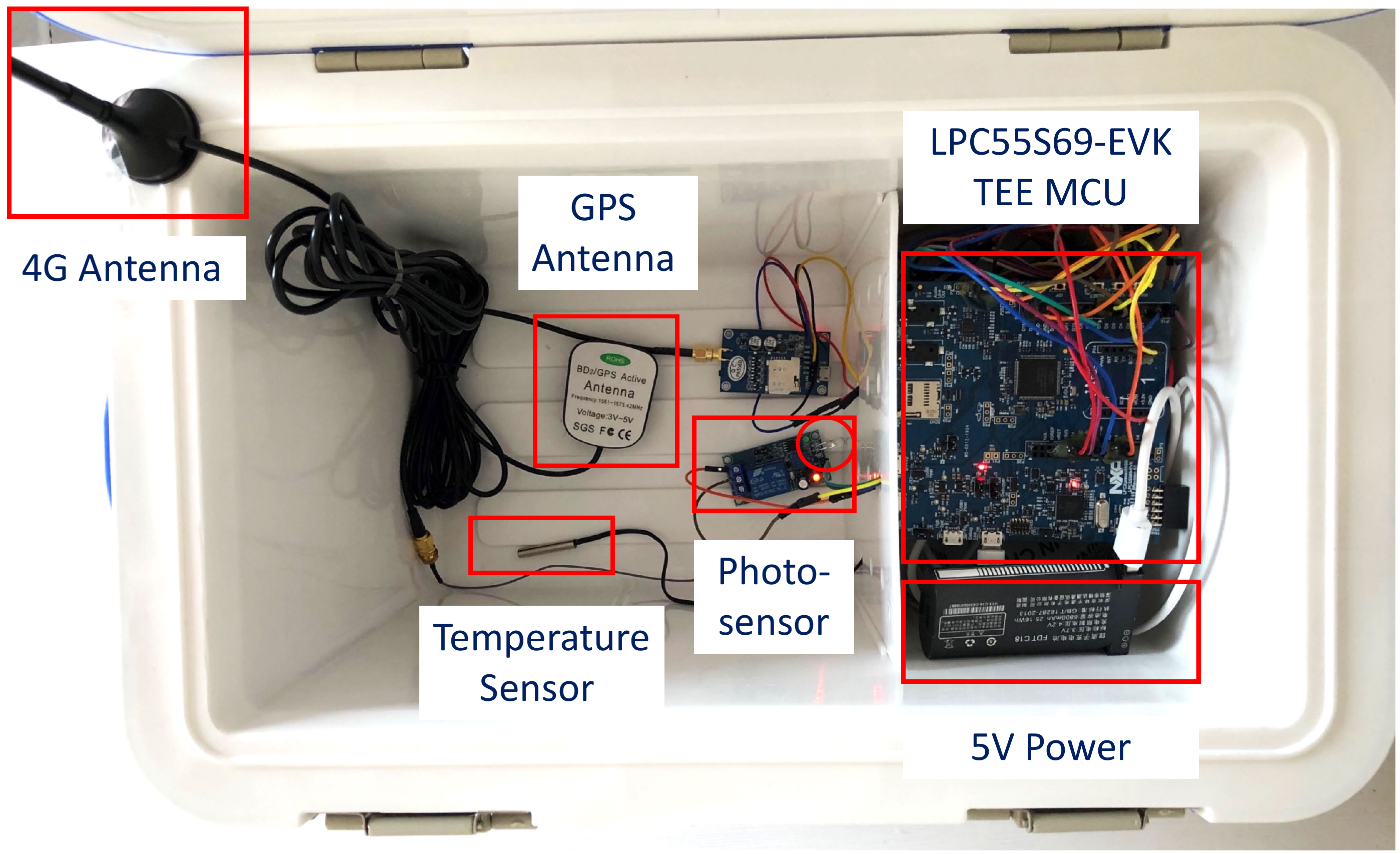}
	\caption{The vaccine shipping box with sensing devices}
	\label{fig:photo}
\end{figure}

\section{Experiments}
\label{sec:exp}
In this section, we put our system into an actual test. We implemented the system as described earlier, attached it into a vaccine shipping box, and performed consistency monitoring towards the vaccine transportation.

\begin{figure*}[!ht]%
	\centering
	\subfloat[GPS movement trace and security checkpoints] {{\includegraphics[width=0.3\textwidth]{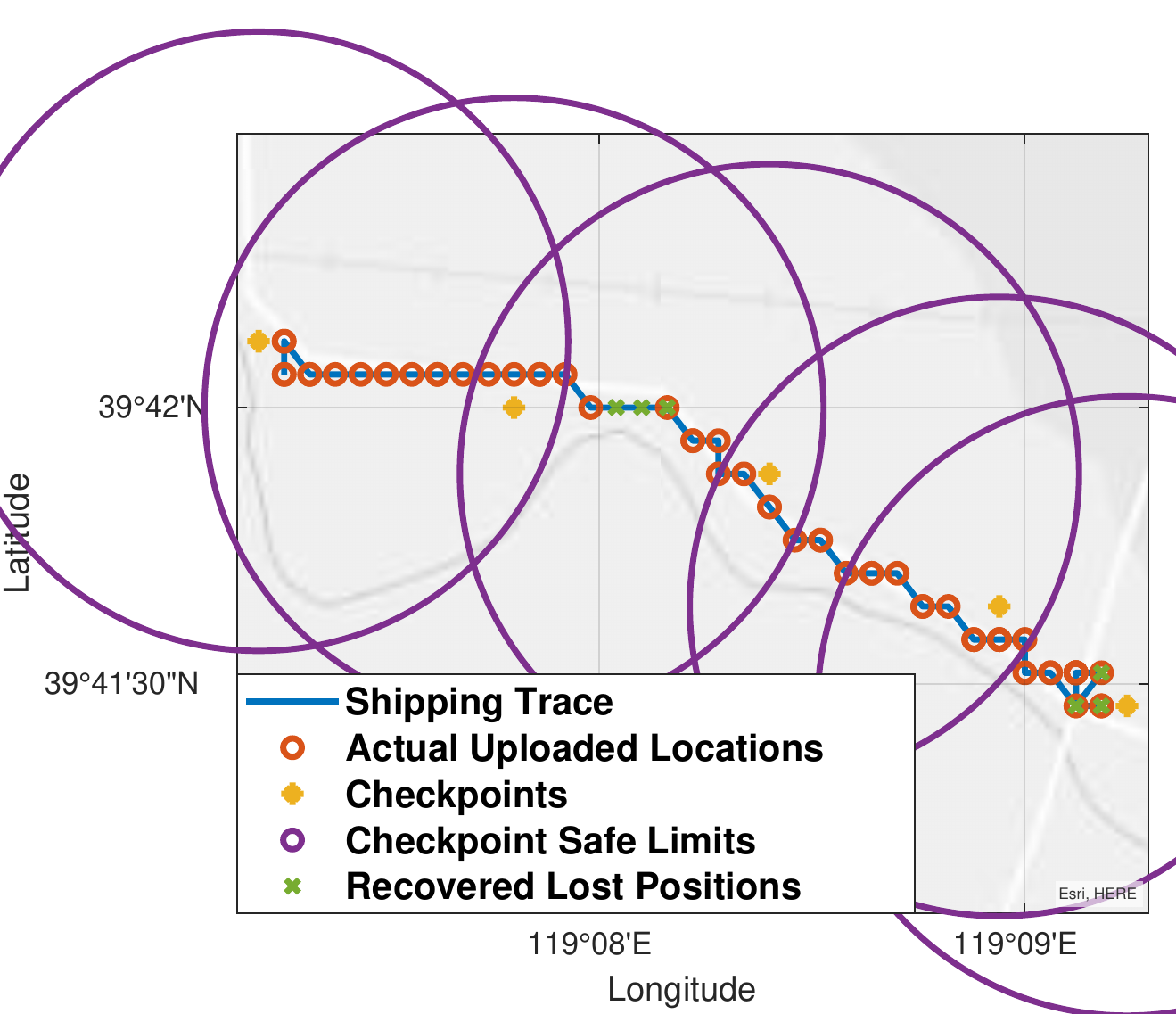} \label{fig:GPStrace}}}%
\qquad
	\subfloat[Temperature fluctuation] {{\includegraphics[width=0.3\textwidth]{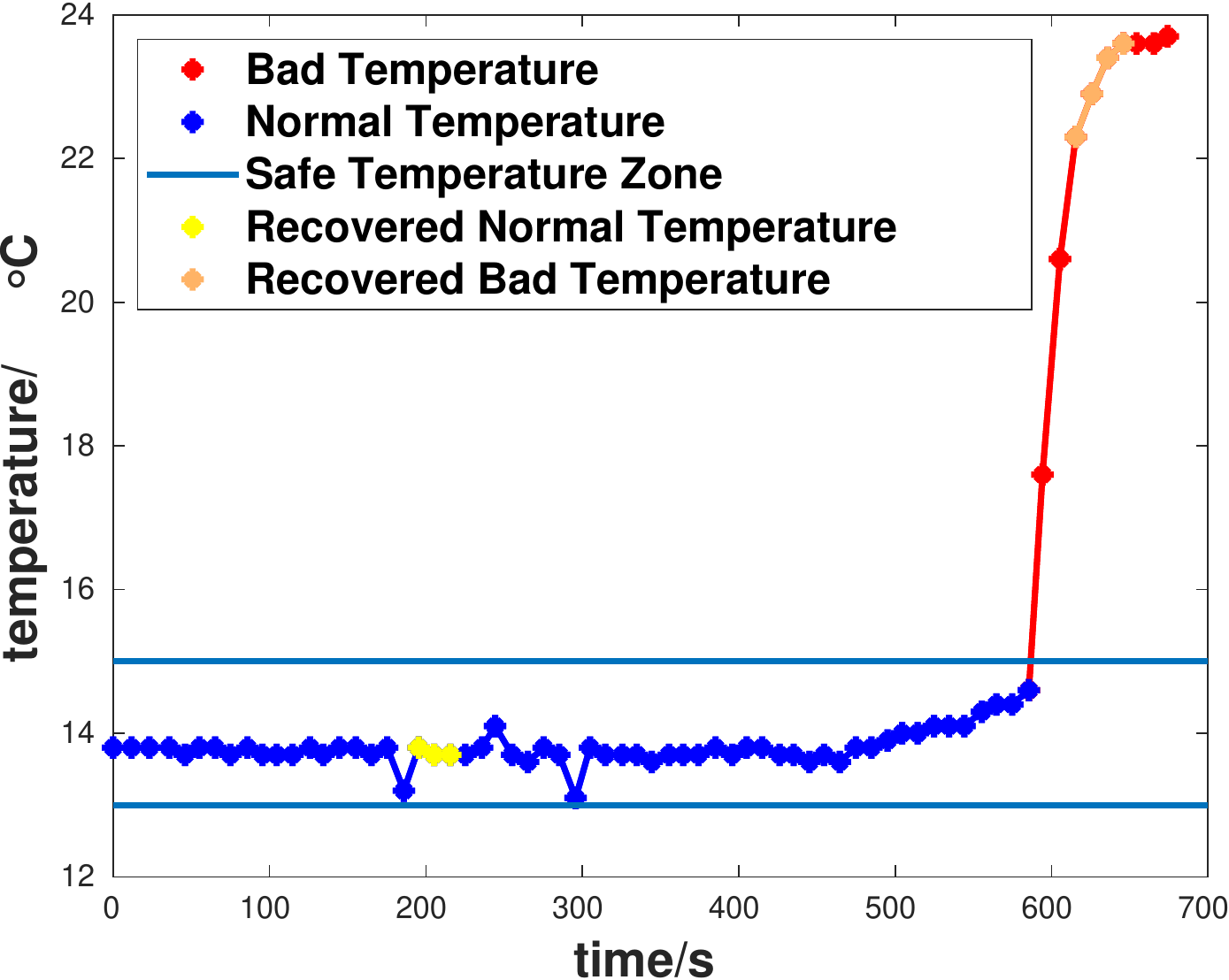}
	\label{fig:temp}}}%
\qquad
	\subfloat[Photosensor and its logic voltage level output] {{\includegraphics[width=0.3\textwidth]{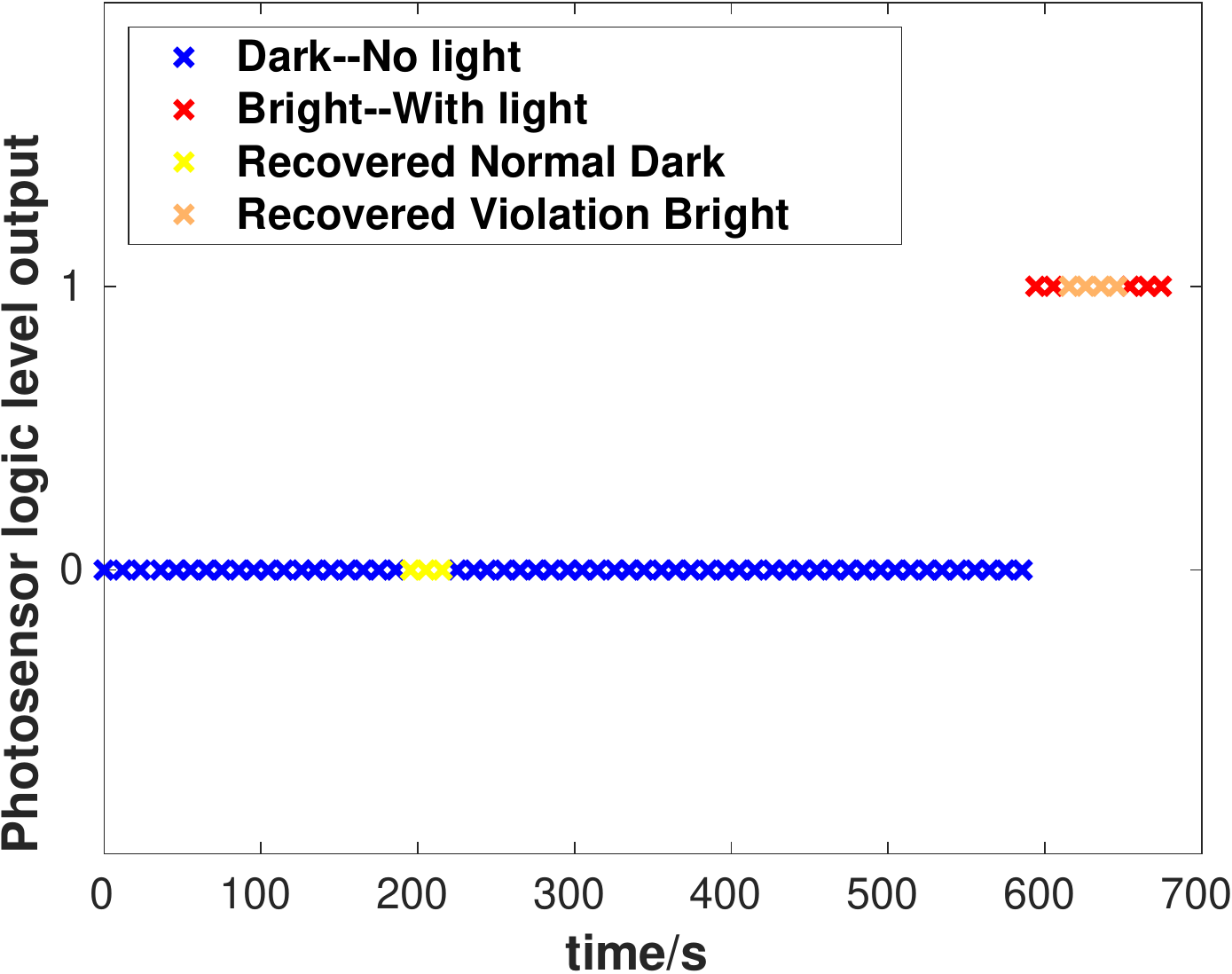}
	\label{fig:binarylux}}}%
\qquad
	\caption{Experimental Sensing Data Separately Visualized as GPS Location, Temperature and Brightness. }%
\end{figure*}

\subsection{Setup}
The system was implemented on an LPC55S69-EVK development board from the NXP Semiconductor. The board consists of a LPC55S69 dual-core Arm Cortex-M33 microcontroller, running at 150MHz, and supports Arm TrustZone technology. For more information regarding LPC55S69-EVK please refer to the user's manual \cite{LPC55S6xMCU}

We developed our own blockchain system using Golang \cite{golang} for the best flexibility support. Golang is quite popular in security community for its memory-safe, high-concurrent, and high-usable properties. We adopted the Tendermint-BFT consensus algorithm, which is the Delegated Proof-of-Stake version of the original Practical Byzantine Fault Tolerance mechanism \cite{buchman2018latest}. Specifically, Tendermint-BFT assigns different weights to different nodes during BFT voting, while PBFT assigns equal weights to all nodes. This weight, in practice, can be used to represent different trustworthiness of different nodes. We implemented the main blockchain system with about 4000 line of codes in Golang. The PC we used is a 8-Core Intel i7-6700HQ @ 2.6GHz with 16G memory and Ubuntu 18.04.1 GNU/Linux. We simulated 6 blockchain nodes on top of this PC as our blockchain system. The vaccine shipping box communicates with the blockchain through the civil-use commercial level 4G-LTE, which has an actual bandwidth around 10 Mbps.

\subsection{Evaluation}

In our evaluation, we burned the project (containing codes for both the secure zone and the non-secure zone) into a LPC55S69-EVK board system. By providing a 5V external power bank, we made the system a standalone one.  The board was positioned to the right side of the inside vaccine shipping box, leaving all critical sensors on the left side. As one can see from Fig. \ref{fig:photo}, the temperature sensor was the silver textured probe, and the main part of the photosensor was a light-sensitive LED. The GPS antenna was small in size and its signal power was not affected when sealed inside the box. The 4G antenna  was attached to the exterior top-left of the box. We set $\Delta t$ to be 10 seconds, which means that we sensed the environment and uploaded to the blockchain one record every 10 seconds. The maximum number of blocks for error recovery was fixed to 5. In our experiment, we attached the vaccine shipping box at the back seat of a motorcycle. The safe vaccine transportation temperature was ranged from 13 to 15 degrees.

We rode the motorcycle along a street in a suburban area and continuously collected the sensor data. To decrease the inside temperature of the vaccine box we placed two ice bags. The following results represent the data collected from 19:15:19 to 19:26:47 after the temperature dropped to the safe range, during which the 4G antenna was unplugged from 19:18:48 to 19:19:08 and from 19:25:48 to 19:26:18 to mimic two short jamming attacks. The vaccine box was opened at 19:25:27 and remained open until 19:26:47. As the sensing interval is 10 seconds, we collected in total 92 data points. Due to jamming we missed 3 records from the first attack and 4 records from the second one.

%

From the 92 data points one can experimentally investigate the critical latencies of the system, i.e., $\epsilon_1, \epsilon_2$, and $\epsilon_3$. 
We observed that  $\epsilon_1$ was between 50 to 158 ms with a mean of 84 ms and a variance of 0.002 $ms^2$. 
When the 4G signal was stable, the 4G transmission latency $\epsilon_2$ was between 50 and 80 ms, with a mean of 76.3 ms and a variance of $2.2 \times 10^{-5}$ $ms^2$.  Each data record was about tens of KB so it did not inflict huge transmission overhead. The blockchain latency $\epsilon_3$ was less than 60 ms with a mean of 53.85 ms and a variance of $2.36 \times 10^{-7}$ $ms^2$, as we had only 6 blockchain nodes. The maximum latency among all the data was 7.78 seconds. 
So one can safely assert that the total latency upper limit was 8 seconds for our test system.

Fig. ~\ref{fig:GPStrace} visualizes the actual shipping trace. We setup five predefined security checkpoints (gold stars) and fixed a radius of 500 meters (purple big circles). The security checkpoints and radius together define an approved ground transportation trace, which is a road shown as a faint white trace from left to right \footnote{Not to confuse with the grey one, which is a river.}. As one can see from the figure, the location was recorded, uploaded and authenticated regularly (red dots), demonstrating a trusted actual transportation trace. Although there were 3+4 planned jamming losses (green crosses), the records were all resent and recovered by the next following one. The 3 dropped records in the middle were clear to see as green crosses, with one of them overlapping with another data. The 4 in the end were all overlapping with others as we were circling in a small area. As one can see from the figure, the driver clearly did not deviate from the predefined route.


During transportation,  we first kept the temperature low and made it stay steady at 14 degrees Celsius. Fig.~\ref{fig:temp} shows the recordings of our actions, where we kept the box closed and sealed during time 0-600 (blue stars), and then opened it (at a summertime) after time 600 (red stars). Note that the $x$-axis uses the relative time from 0 to 700 instead of the actual data collection time for better illustration. After the box was opened, its internal temperature was quickly increased for 10 consecutive records and stayed stable at around 24 degrees. When these records were synchronized at all blockchain nodes, it was easy to observe and prove that this vaccine was exposed to undesirable temperature and should be no longer acceptable for customer use. Our planned jamming losses at time 200 (yellow stars) and 600 (orange stars) were also recovered, as shown in this figure.


The outputs of the photosensor were illustrated in Fig. \ref{fig:binarylux}. One can see clearly that the sealed vaccine box was opened at time 600. When the box was sealed, it remained dark (with a proper threshold that can negate faint LEDs inside the box) and hence the photosensor constantly output 0 (blue crosses); while when the box was opened at time 600 and was exposed to the sun, the photosensor snapped the ray and output the 3.3 V high logic voltage level to the board (red crosses indicating the logical output of 1 in Fig. \ref{fig:binarylux}). The photo and temperature sensors together can enhance the confidence that the box and its inside were indeed compromised. Fig. \ref{fig:binarylux} also demonstrates that our 3+4 planned jamming losses at around time 200 and 600 were recovered correctly.


\section{Discussions}
\label{sec:secfuture}


We emphasize that this paper presents a general framework that extends the on-chain trust to off-chain in a truthful, real-time and continuous manner. It is not confined to the vaccine tracing use case -- the framework can be extended to any use case that meets the requirements specified in Section~\ref{sec:def}. Most importantly, future implementations must follow Definition~\ref{def:trust}, which defines the three properties of trustworthiness, i.e., truthfulness, real-time consistency and continuity, to ensure the trustworthy mapping between a physical object and its digital entity.


Our framework scheme might suffer from the following two security limitations. 1) As we have claimed in the trust model section, physical damages towards TEE is out of scope of this paper. Such type of attacks is actually very possible. For example, since our method retrieves data using the \texttt{Sensors\_get} interface from sensors every $\Delta t$ time, an attacker may physically remove the sensor and weld a malfunctioning one within the $\Delta t$ window. 2) The monitoring sensors can only capture the events occurring in the close neighborhood of the physical object but cannot prove the authenticity of the object. Therefore we cannot be 100\% sure whether or not the object is replaced by a fake midway when the sensors are disabled or when the object is out of their sensing ranges -- we can only \textit{suspect that the object could have been mistreated when the sensors are disabled}. A typical example is the wine forgery attack which may happen during transportation, where an attacker may first disable all sensors, then drill a tiny hole on the wine body, extract all the real wine, replace it with fake one, and finally plug the hole to finish up. 


Both limitations mentioned above can be mitigated using Physical Uncloneable Function (PUF). Limitation 1 is essentially an authentication problem -- in our current design we do not authenticate our sensors, and any connection through wire is considered secure. PUF-enabled sensors can achieve efficient authentication and thus provide a practical security improvement, ensuring not only our code programmed in TEE is tamper-proof, but also our hardware component connected to TEE is tamper-proof. Limitation 2 is a fundamental problem in product anti-forgery tracing, and digitally verifying the authenticity of a physical object is notoriously hard. One possible solution is to design a \textit{digital seal} using PUF. A digital seal is like a regular seal, but can electronically detect sabotage by continuously verifying the integrity of the protective physical material. In the scenario of wine anti-forgery tracing, one can cover the wine bottle with a double-layer thin metal membrane, which works as a capacitance. A PUF-enabled detector is then connected to the positive and negative side of the capacitance to continuously monitor the capacitance value in each pixel, then output the values in the structure of a matrix. As any touching or drilling can significantly affect the capacitance values of an area, the mis-behaviors can then be identified. In our future research, we will explore PUF based solution approaches to overcome the limitations mentioned above.  

\section{Related Work}
\label{sec:related}
There exist a few attempts to extend the blockchain trustworthiness from on-chain to the physical world, by either applying blockchain directly to the physical scenarios making use of blockchain security, or using secure hardware or cryptographic primitives to enhance blockchain's internal security and trustworthiness.

Blockchain technologies have been applied in the fields of smart city, autonomous driving, smart home, etc. Zhou \textit{et al.} proposed a hierarchical IoT architecture for smart home environments, where each home maintains a private chain that can hierarchically construct into a public one. However there were no countermeasures or discussions regarding how to stop the home owner from faking the record and lying to the public chain and how to validate the subchain data authenticity at the public chain. Guo \textit{et al.} developed an event recording system for autonomous vehicles with blockchain, and presented the concept of ``proof of event'', which 
is a hash digest of a combination of event locations, embedded vehicle event recorder readings and timestamps. This digest does not include any unforgeable data or data from an authenticated source, thus it can be easily faked afterwards \cite{guo2018blockchain}. It is also vulnerable to sybil attacks, where the fake data is shared among all sybil identities and uploaded by all, in order to increase the confidence of the fake data.

The above examples demonstrate a common problem shared by most blockchain applications: the blockchain trustworthiness is confined within the on-chain environment. There is no strong guarantee that the on-chain records continuously and truthfully reflect the \textit{true} physical world, and that the blockchain control instructions are forcibly deployed on time.

Secure hardware or cryptographic primitives were also employed to enhance the blockchain security. Lind \textit{et al.} proposed TEEChain, a TEE-based secure payment blockchain network in which TEE was used as \textit{treasuries} to manage the off-chain funds and payments inside the TEE secure zone \cite{lind2019teechain}. Ayoade \textit{et al.} used the Intel SGX to securely offload data from on-chain to SGX-powered databases, which can hash the stored data and compare to the correct hash stored in the blockchain for data correctness verification. Dang \textit{et al.} managed to scale blockchain via sharding and Intel SGX. They made use of the protected enclave module in Intel SGX to enhance security and increase the performance of the Proof-of-Elapsed-Time consensus and the PBFT consensus algorithms. Chainlink \cite{Chainlink} and Provable \cite{Provable} adopted various secure hardware such as Intel SGX to build a trusted inter-blockchain data exchange solution or website-to-blockchain trusted data input oracle. Some of these works made impressive progress by achieving trust extension between blockchains or between a blockchain and other digital environments, but they cannot extend trust from on-chain to the physical world. Not to mention that Intel SGX is a server-level hardware that typically costs around \$300 to \$400, which highly restricts its adoptions by edge applications.

In this paper we propose a scheme to extend trust from on-chain to the off-chain physical world by developing a TEE-enabled trusted environment monitoring system and a fault tolerance uploading protocol to achieve real-time high data consistency between on-chain digital world and off-chain physical world. As far as we know, we are the first to overcome all engineering challenges and develop such a trustworthy system over the Cortex M33 MCU, which is highly competitive in price --  around \$40, approximately one-tenth of that of Intel SGX.

\section{Conclusions}
\label{sec:conclusion}

In this paper, we propose a scheme to extend blockchain trust from on-chain to off-chain, and implement the full system taking trustworthy vaccine tracing as an example. Our scheme consists of a TEE-enabled Trusted Environment Monitoring System that continuously senses and generates anti-forgery data, and a consistency protocol that uploads data from the system to blockchain in a truthful, real-time consistent, continuous, and fault-tolerant way. Our experiment records the internal status of the vaccine shipping box during the whole shipping process. One can see that our system is quite efficient with approximately 7-second of system processing time, 70ms of transmission time, and 40ms of blockchain synchronizing time. Our planned jamming attacks and physical violations are also captured and errors are recovered as expected.

As an exploratory work, we select vaccine tracing as an example and choose the photosensor, GPS sensor and temperature sensor to describe the status of a moving vaccine box. As we have our full system available, which is a relatively general framework that can customize critical parameters and tools, researchers can extend to any application of their interest by developing their own additional sensing devices based on our system template. We keep this work open-sourced at \url{https://github.com/zhuaiballl/TEE-enabled_Trusted_Environment_Monitoring_System}.

\section{Acknowledgement}
It was partially supported by the National Key R\&D Program of China under grant 2019YFB2102600, the National Natural Science Foundation of China under grants U1811463, 61971014, 61832012, 61771289, and 11675199, and the National Science Foundation of the US under Grants IIS-1741279 and CNS-1704397.

\bibliographystyle{IEEEtran}
\bibliography{liu}


\begin{IEEEbiography}[{\includegraphics[width=1in,height=1.25in,clip,keepaspectratio]{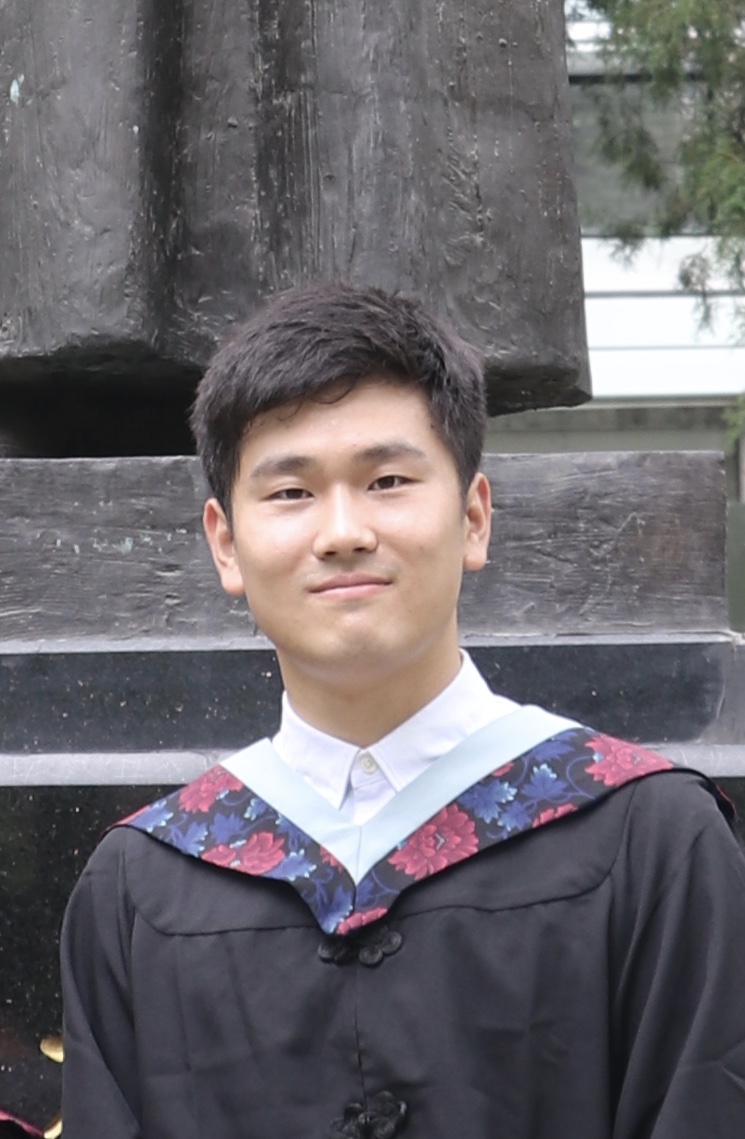}}]{Chunchi Liu} received his PhD degree from The George Washington University, Washington DC, USA, in 2020, and his BS degree with distinction from Beijing Normal University, Beijing, China, in 2017, both in Computer Science. He is now the Lead Scientist at Ernst \& Young Greater China Area and is a member of EY global executive management division. His current research focuses on Blockchain, Internet of Things, Applied Cryptography and Security.
\end{IEEEbiography}


\begin{IEEEbiography}[{\includegraphics[width=1in,height=1.25in,clip,keepaspectratio]{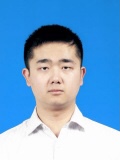}}]{Hechuan Guo} is a PhD student in Computer Science at Shandong University, Qingdao, China. He received his BS Degree in Computer Science in 2017 and MS degree in Engineering in 2020, both from Beijing Normal University, Beijing, China. His current research focuses on Blockchain, Consensus Protocols, Security, and Applied Cryptography. 
\end{IEEEbiography}

\begin{IEEEbiography}[{\includegraphics[width=1in,height=1.25in,clip,keepaspectratio]{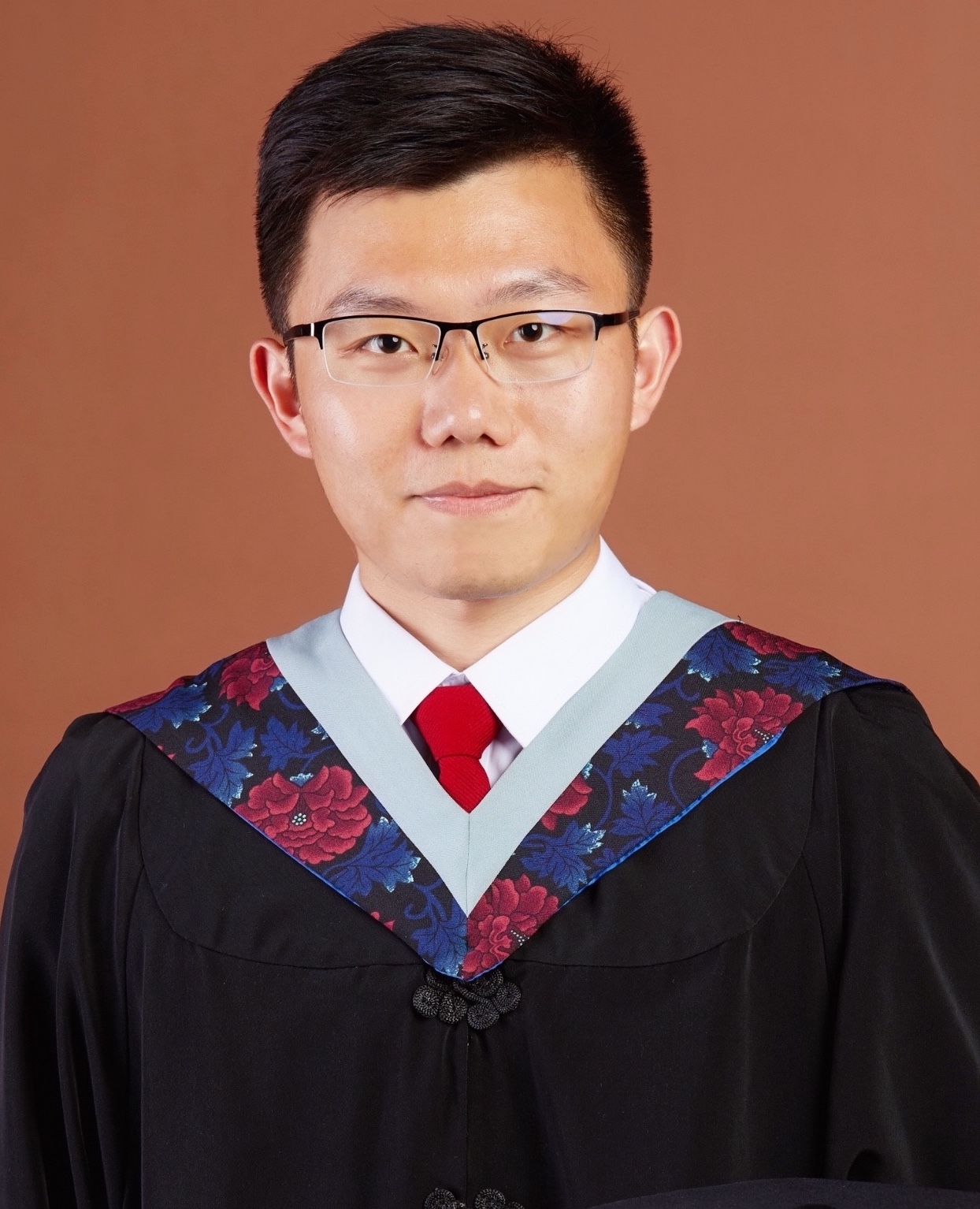}}]{Minghui Xu} is a PhD student in Computer Science at The George Washington University, Washington DC, USA. He received his BS degree in Physics in 2018 and minored in Computer Science during 2016-2018 from Beijing Normal University, Beijing, China. His current research focuses on distributed computing, blockchain, and quantum computing.
\end{IEEEbiography}

\begin{IEEEbiography}[{\includegraphics[width=1in,height=1.25in,clip,keepaspectratio]{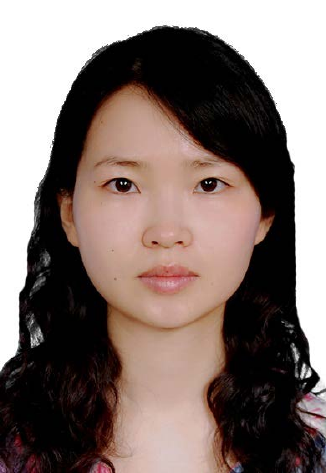}}]{Shengling Wang} is a professor in College of Information Science  and Technology, Beijing Normal University. She received her Ph.D. in 2008 from Xi’an Jiaotong University. Then she did her postdoctoral research in the Department of Computer Science and Technology at Tsinghua University from 2008 to 2010, and worked as a faculty member from 2010 to 2013 in the Institute of Computing Technology of CAS. Her research focuses on mobile/wireless networks, game theory, and crowdsourcing.
\end{IEEEbiography}

\begin{IEEEbiography}[{\includegraphics[width=1in,height=1.25in,clip,keepaspectratio]{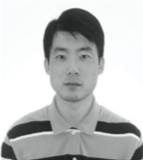}}]{Dongxiao Yu} received his BS degree in Mathematics in 2006 from Shandong University, and PhD degree in Computer Science in 2014 from The University of Hong Kong. He became an associate professor in the School of Computer Science and Technology, Huazhong University of Science and Technology, in 2016. Currently he is a professor at the School of Computer Science and Technology, Shandong University. His research interests include wireless networking, distributed computing, and graph algorithms.
\end{IEEEbiography}

\begin{IEEEbiography}[{\includegraphics[width=1in,height=1.25in,clip,keepaspectratio]{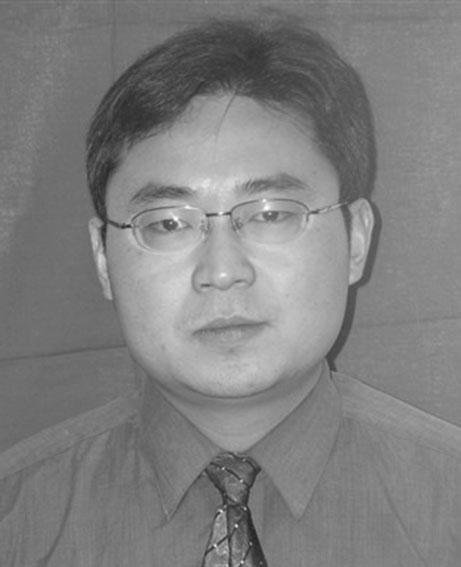}}]{Jiguo Yu} received his Ph.D. degree in School of Mathematics from Shandong University in 2004. He became a full professor in the School of Computer Science, Qufu Normal University, Shandong, China in 2007. Currently he is a full professor in Qilu University of Technology (Shandong Academy of Sciences), His main research interests include privacy-aware computing, wireless networking, distributed algorithms, blockchain, and graph theory.
\end{IEEEbiography}

\begin{IEEEbiography}[{\includegraphics[width=1in,height=1.25in,clip,keepaspectratio]{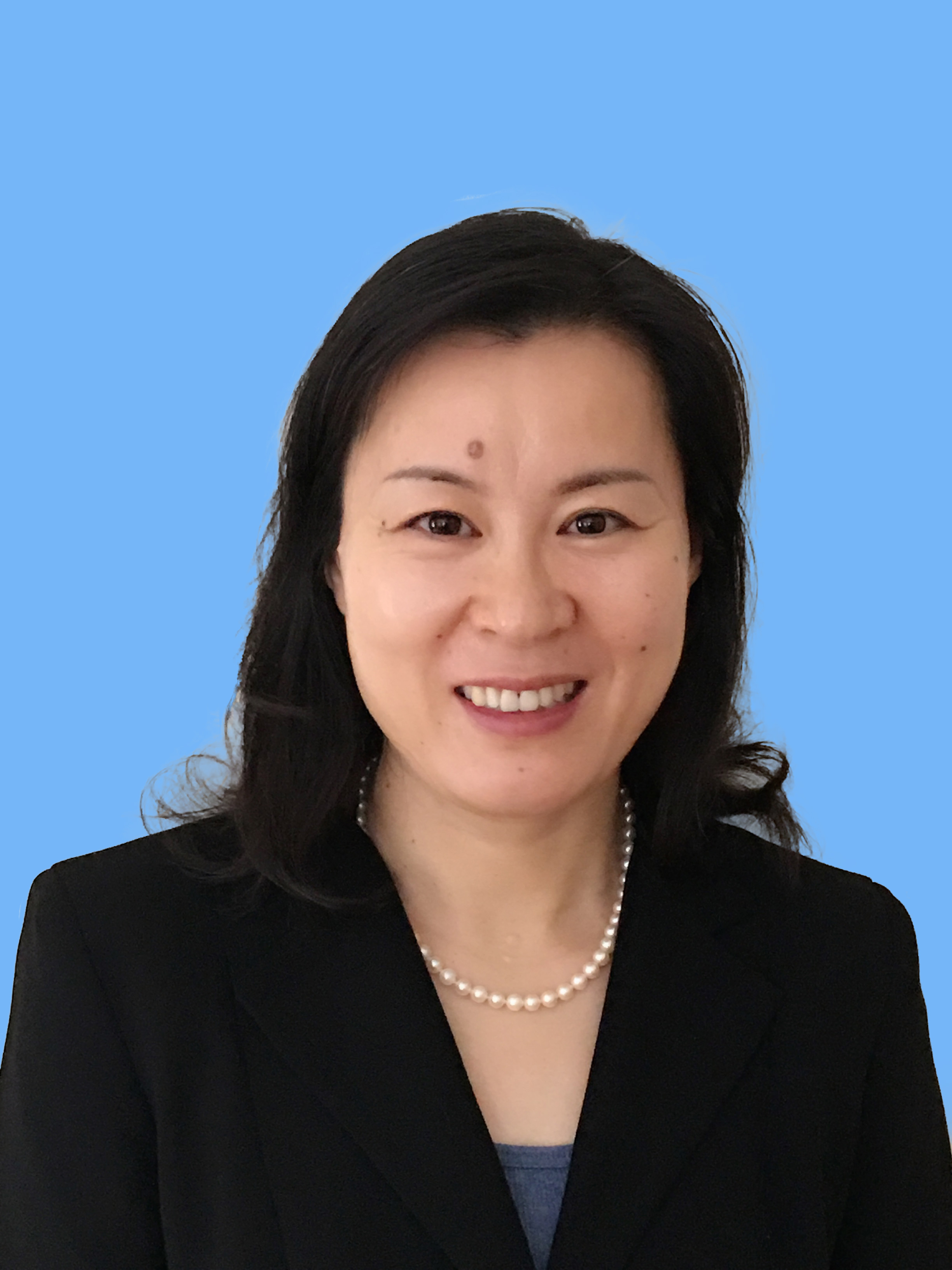}}]{Xiuzhen Cheng} received her MS and PhD degrees in computer science from University of Minnesota, Twin Cities, in 2000 and 2002, respectively. She was a faculty member at the Department of Computer Science, The George Washington University,  from 2002-2020. Currently she is a professor of computer science at Shandong University, Qingdao, China. Her research focuses on blockchain computing, security and privacy. She is a Fellow of IEEE.
\end{IEEEbiography}

\end{document}